\title{Synchronisability in Mailbox Communication}
\author{Cinzia {Di Giusto}
\institute{Université Côte d'Azur,\\ CNRS,I3S, France}
%\email{cinzia.di-giusto@univ-cotedazur.fr}
\and
Laetitia Laversa 
\institute{Université Sorbonne Paris Nord,\\ Paris, France}
%\email{laversa@lipn.univ-paris13.fr}
\and
Kirstin Peters
\institute{Universität Augsburg,\\ Augsburg, Germany}
%\email{kirstin.peters@uni-a.de}
}
\begin{document}

\maketitle

%%%%%%%%%%%%%%
%  abstract  %
%%%%%%%%%%%%%%

\begin{abstract}
We revisit the problem of synchronisability for communicating automata, \ie whether the language of send messages for an asynchronous system is the same as the language of send messages with a synchronous communication. The un/decidability of the problem depends on the specific asynchronous semantics considered as well as the topology (the communication flow) of the  system. 
Synchronisability is known to be undecidable under the peer-to-peer semantics, while it is still an open problem for mailbox communication. The problem was shown to be decidable for ring topologies. In this paper, we show that when generalising to automata with accepting states, synchronisability is undecidable under the mailbox semantics, this result is obtained by resorting to the Post Correspondence problem. In an attempt to solve the specific problem where all states are accepting, we also show that synchronisability is decidable for tree topologies (where, as well as for rings, peer-to-peer coincides with mailbox semantics).
We also discuss synchronisability for multitrees in the mailbox setting.
\end{abstract}

%%%%%%%%%%%%%%%%%%%
%  introduction   %
%%%%%%%%%%%%%%%%%%%

\section{Introduction}

%Asynchronous distributed systems 
%such as those implemented in programming languages 
%that allow the manipulation of multiple processes, are common but error-prone. The verification of these systems and, in particular, of their communication is therefore necessary. 

Communicating automata~\cite{brand_communicating_1983}, \ie a network of finite state automata (the participants) where transitions can be interpreted as sending or receiving actions, are a common way to model communication protocols.
This model allows to consider systems exchanging messages in either a synchronous or asynchronous way. While in the former case communication happens simultaneously, in the latter, messages are sent to buffers where they wait until they are received by other participants. Several semantics have been proposed in the literature, \eg  \cite{charron_synchronous_1996,chevrou_diversity_2016,digiusto_partial_2023}.
Nonetheless, the two most prominent ones are peer-to-peer (\Ptp) communication (where between each pair of participants there are two FIFO buffers, one per direction of communication) and  \Mailbox communication (where each participant has its own FIFO buffer that stores all received messages, whatever the sender). 
\Ptp is more generally found in channel-based languages (\eg Go, Rust), while \Mailbox is more common on actor languages such as Erlang or Elixir.

From the expressiveness point of view, when considering asynchronous communicating automata, Turing machines can be encoded with  two participants and two FIFO buffers only \cite{brand_communicating_1983}.  On the other side of the spectrum, synchronous systems are equivalent to finite state automata.
In order to fill the gap between these two extremes and recover some decidability,  several approaches have been considered to approximate the synchronous behaviour. Results can be classified into two main families of systems. In the first one,  asynchronous behaviours are limited by bounding the size of   buffers. While in the second, asynchronous behaviours are bounded by only considering systems where all the executions can be related  (up to different equivalence relations) to synchronous executions. 
%executions are compared (up to equivalence relations) to synchronous ones in order to mimic synchronous systems. 
%systems are classified whether their executions (up to causal equivalence) may be divided into slices.
%
Some examples of the first family  are existentially ($\exists$) and universally ($\forall$) $B$-bounded systems \cite{genest_kleene_2006}. A system is universally $B$-bounded if all its executions are $B$-bounded, \ie can be made with buffers of size $B$, and existentially $B$-bounded if all its executions are causally equivalent to a $B$-bounded one. 
If a system is $\exists$/$\forall$-$B$-bounded, model checking problems (\ie checking whether a configuration is reachable or more generally whether a monadic second order formula is satisfied) turn out to be decidable. Unfortunately deciding whether a given system is $\exists$/$\forall$-$B$-bounded, when $B$ is unknown is undecidable for \Ptp \cite{kuske_communicating_2010} and  \Mailbox \cite{bollig_unifying_2021}.
Instead, as an example of  the second family, in \cite{bouajjani_completeness_2018}, the authors define the class of $k$-synchronisable systems, which requires that any execution is causally equivalent to an execution that can be divided into slices of $k$ messages, where all the sending must be done before the receiving. In these systems, the reachability of a configuration is decidable. Moreover, deciding whether a given \Ptp system is $k$-synchronisable is decidable \cite{bouajjani_completeness_2018}, and the same for a given \Mailbox system \cite{digiusto_k-synchronizability_2020}. It is also decidable whether there exists a $k$ such that a given \Mailbox system is $k$-synchronisable \cite{giusto_guessing_2021}. 

The causal equivalence relation \cite{digiusto_k-synchronizability_2020} derives from the \emph{happened before} relation \cite{lamport_time_1978}, which ensures that actions are performed in the same order, from the point of view of each participant. It allows to compare and group executions into classes. In \cite{basu_deciding_2016}, the authors define a different notion of synchronisability, which does not rely on causal equivalence, but on send traces (the projection of executions onto send actions). A system is synchronisable if each asynchronous execution has the same send trace as a synchronous execution of the system. 
This differs from k-synchronisability, since actions may be performed in a different order by a participant, and the class of k-synchronisable systems and the one of synchronisable systems are incomparable. 
Synchronisability of a system implies that reachability is decidable in it. 
A way of checking if a system is synchronisable (for \Ptp and \Mailbox) was proposed
in \cite{basu_deciding_2016}: the authors claimed that if the set of synchronous send traces is equal to the set of 1-bounded send traces, then the system is synchronisable. The claim is actually false, as shown in \cite{finkel_synchronizability_2017}. The authors  provide two counterexamples showing that the method is faulty for both \Ptp and \Mailbox.  The counterexamples in \cite{finkel_synchronizability_2017} expose cases where the set of 2-bounded send traces contains traces that do not exist in the set of 1-bounded send traces, while the latter are identical to the set of synchronous send traces.
Moreover, checking synchronisability of a system communicating with a  \Ptp architecture, is shown to be undecidable in \cite[Theorem 3]{finkel_synchronizability_2017}, while the  problem remains open for \Mailbox systems. 

\paragraph{Contributions.} In this paper we start answering this last question. In a first attempt to assess the problem, we relax one of the hypothesis in \cite{finkel_synchronizability_2017} and consider the general case of communicating automata with final accepting states. This allows us to code the Post Correspondence Problem into our formalism and thus prove undecidability of what we call the  Generalised Synchronisability Problem.
Final states are a key ingredient in the proof which cannot be  adapted to the case without accepting states (or said otherwise where all states are accepting).

To understand where the expressiveness of the problem lies, we started considering how different topologies of communication --\ie the underlining structure of exchanges-- affects the decidability of the Synchronisability Problem.
Our first step generalises another result of \cite{finkel_synchronizability_2017}, where it was shown that Synchronisability Problem is decidable for oriented ring topologies, \cite[Theorem 11]{finkel_synchronizability_2017}. Here, we consider trees and show that the Synchronisability Problem is decidable. The result is obtained by showing that the language of buffers is regular and can be computed. We believe that our algorithm can be extended to multitrees (which are acyclic graphs where among each pair of nodes there is a single path). %, and decidability of synchronisability for \Mailbox communication in general.
Notice that our approach is more direct than the one in \cite{basu_deciding_2016}.
Instead of comparing (regular) sets of send traces we directly analyse the content of buffers
 
\paragraph{Outline.}
The paper is organised as follows: \cref{sec:prelim} introduces the necessary terminology. The first undecidability result is given in  \cref{sec:post}, while \cref{sec:decidability} discuss the decidability of synchronisability for tree topologies. 
%\cref{sec:conjecture} asserts the final conjecture for multitrees with \Mailbox communication. 
Finally, \cref{sec:concl} concludes with some perspectives.
%\kir{Since we have only 5 pages, I do not think, that an outline is needed.}

%%%%%%%%%%%%%%%%%%%
%  preliminaries  %
%%%%%%%%%%%%%%%%%%%

\section{Preliminaries}
\label{sec:prelim}

For a finite set $\alphabet$, a word $w= a_1a_2\dots a_n \in \alphabet^*$  is a finite sequence of symbols, such that $a_i \in \alphabet$,  for all $1 \leq i \leq n$. 
%$\wordSet$ denotes the set of finite words over $\alphabet$,
The concatenation of two words $\word_1$ and $\word_2$ is denoted $\word_1\concat \word_2$, $\length{\word}$ denotes the length of $\word$,
and $\emptyword$ denotes the empty word. We assume some familiarity with non-deterministic finite state automata,
and we write $\languageof{\Automaton{}}$ for the language accepted
by  automaton $\Automaton{}$. 
%For two sets $\alphabet$ and $\bufferNum$,
%we write $\bufferWord$ for an element of $\alphabet^{\bufferNum}$,
%and $\buffer{i}$ for the $i$-th component of $\bufferWord=(\buffer{i})_{i\in \bufferNum}$.

A communicating automaton is a finite state machine that performs actions of two kinds: either sends or receives. 
A \emph{network} of communicating automata, or simply network, is the parallel composition of a finite set $\Part$ of participants that  exchange messages.  We consider a finite set of messages \messageSet. Each message in \messageSet consists of a sender, a receiver, and some finite information.
We denote $\msg{a}{p}{q} \in \messageSet$ the message sent from peer $p$ to  $q$ with payload $a$, $p \neq q$, \ie a peer can not send/receive messages to/from itself.
An action is the send $!m$  or the reception $?m$ of a message $m \in \messageSet$. We denote the set of actions for peer $p$ $\actionSet{p} = \{ \send{a}{p}{q}, \rec{a}{q}{p} \mid \msg{a}{p}{q}\in \messageSet \wedge q \in \Part \}$  and $\sendact{p}  = \{ \send{a}{p}{q} \mid \msg{a}{p}{q}\in \messageSet \wedge ~q \in \Part \}$ the set of sends from $p$.

\begin{definition}[Network of communicating automata]
	\label{def:network}
	$\Network = ( (\Automaton{p})_{p \in \Part}, \messageSet)$ is a network of communicating automata, where:
	\begin{enumerate}
		\item for each $p \in \Part$, $\Automaton{p}=(\stateSet{p}, \state{s}{0}{p}, \messageSet,  \autTransition{p}, \finalStates{p})$ is a communicating automaton with $\stateSet{p}$ is a finite set of states,  $\state{s}{0}{p} \in \stateSet{p}$ the initial state, $\autTransition{p} \ \subseteq \stateSet{p} \times \actionSet{p} \times \stateSet{p}$ is a transition relation, and $\finalStates{p}$ a set of final states and
		\item for each $ m \in \messageSet $, there are $ p \in \Part $ and $ s_1, s_2 \in \stateSet{p} $ such that $ \left(s_1, !m, s_2 \right) \in {\autTransition{p}} $ or $ \left(s_1, ?m, s_2 \right) \in {\autTransition{p}} $.
	\end{enumerate}
\end{definition}

The topology of a system is a graph with arrows from senders to receivers.

\begin{definition}[Topology]
	\label{def:topology}
	Let $\Network=((\Automaton{p})_{p\in \Part}, \messageSet)$ be a network of communicating automata. Its topology is an oriented graph $\topo{\Network} = (\vertexSet,\edgeSet) $, where $\vertexSet=\{p \mid p \in  \Part\}$ and $\edgeSet = \left\{ (p, q) \mid \exists \msg{a}{p}{q} \in \messageSet \right\}$.
\end{definition}

Let $\senders{p}$ (\resp $\receivers{p}$) be the set of participants sending to (\resp receiving from) $p$.

Different semantics can be considered for the same network depending on the communication mechanism. A \emph{system} is a network together with a communication mechanism, denoted $\System{\Network}{com}{}$.  
It can communicate synchronously or asynchronously. 
In a synchronous system, each message sent is immediately received, \ie the communication exchange cannot be decoupled.
In an asynchronous communication instead, messages are stored in a memory. Here we only consider FIFO (First In First Out) buffers, which can be bounded or unbounded. Summing up, we deal with: 

\begin{itemize}
\item Synchronous (\synch): there is no buffer in the system, messages are immediately received when they are sent; %The transition relation is $\TransitionAsync{}{act_2}{}{\synch}{}{}$.  

\item \Ptp (\pp): there is a buffer for each pair of peers and direction of communication ($n \times (n - 1)$ buffers), where
one element of the pair is the sender and the other is the receiver; 
%A network with a \Ptp type of communication composed of $n$ peers contains

\item \MAILBOX (\mailbox): there are as many buffers as peers, each peer receives all its messages in a unique buffer, no matter the sender. 
\end{itemize}

We use \emph{configurations} to describe the state of a system and its buffers. 
\begin{definition}[Configuration]\label{def:configuration} 
Let $\Network = \left( (\Automaton{p})_{p \in \Part}, \messageSet\right)$ be a network.
  A \synch configuration (respectively a \pp configuration, or a \mailbox configuration)
          is a tuple $\config = \left( (\state{s}{}{p})_{p \in \Part}, \bufferSet\right)$ such that:
\begin{itemize}
\item $\state{s}{}{p} $ is a state of automaton $\Automaton{p}$, for all $p\in \Part$
\item $\bufferSet$ is a set of buffers whose content is a word over  $\messageSet$ with:
\begin{itemize}
\item an empty tuple for a \synch configuration, 
\item  a tuple  $(\buffer{12},\dots,\buffer{n(n-1)})$ for a \pp configuration, and
\item  a tuple $(\buffer{1},\dots,\buffer{n})$ for an \mailbox configuration. 
\end{itemize} 
 \end{itemize}
\end{definition}

We write $\emptyword$ to denote an empty buffer, and $\bufferSet^\emptyset$ to denote that all buffers are empty. We write  $\bufferSet\{b_i/b\}$ for the tuple of buffers $\bufferSet$, where  $\buffer{i}$ is substituted with $\buffer{}$.
%A configuration $C$ is \emph{stable} if all buffers are empty, \ie $C=((\state{s}{}{p})_{p\in P}, \buffersInit)$ for some tuple of states $\state{s}{}{p}$ of automata $\Automaton{p}$.
We denote $\configSet$ the set of all configurations, 
$\configInit= \left((\state{s}{0}{p})_{p\in \Part}, \buffersInit \right)$ is the \emph{initial configuration}, and $\configFinal \subseteq \configSet$ is the set of \emph{final configurations}, where $\state{s}{}{p} \in \finalStates{p}$ for all participant $p \in \Part$.

%Notice that, in \Mailbox systems, the same buffer can contain messages from different senders.

We describe the behaviour of a system with \emph{runs}. A run is a sequence of transitions starting from an initial configuration $\configInit$.  
Let $\com \in \{\synch,~\pp,~\mailbox\}$ be the type of communication.
We define $\TransitionAsync{}{}{}{\com}{}^*$ as the transitive reflexive closure of $\TransitionAsync{}{}{}{\com}{}$. 

In order to simplify the definitions of executions and traces (given in what follows) and without loss of generality, we choose to label the transition with the sending message $\send{a}{p}{q}$. 

In a synchronous communication, we consider that the send and the receive of a message have done at the same time, \ie the synchronous relation \synch-send merges these two actions.

\begin{definition}[Synchronous system]\label{def:synchronousSystem}
 Let $\Network = \left((\Automaton{p})_{p\in \Part}, \messageSet\right)$ be a network. The synchronous system $\System{\Network}{\synch}{}$ associated with $\Network$ is the smallest binary relation $\xrightarrow[\synch]{}$ over \synch-configurations such that:
\begin{prooftree}\label{rule:0-send}
\AxiomC{$\Transition{\state{s}{}{p}}{\send{a}{p}{q}}{\state{s'}{}{p}}{p}$}
\AxiomC{$\Transition{\state{s}{}{q}}{\rec{a}{p}{q}}{\state{s'}{}{q}}{q}$}
\LeftLabel{(\synch-send)}
\BinaryInfC{$\TransitionSync{\left( (\state{s}{}{1}, \dots , \state{s}{}{p}, \dots , \state{s}{}{q}, \dots , \state{s}{}{n}), \bufferSet^\emptyset\right)}{\msg{a}{p}{q}}{\left( (\state{s}{}{1}, \dots , \state{s'}{}{p}, \dots , \state{s'}{}{q}, \dots , \state{s}{}{n}), \bufferSet^\emptyset\right)}$}
\end{prooftree}

\end{definition}

\begin{definition}[Peer-to-peer system]\label{def:peerToPeerSystem}
 Let $\Network = \left((\Automaton{p})_{p\in \Part}, \messageSet\right)$ be a network.
The peer-to-peer system $\System{\Network}{\pp}{}$ associated with $\Network$  is the least binary relation
  $\TransitionAsync{}{}{}{\pp}{}$ over $\pp$ configurations
  such that
  %:
%\begin{itemize}
%\item 
for each configuration $C = \left( (\state{s}{}{p})_{p \in \Part}, \bufferSet\right)$, we have 
 $\bufferSet = (b_{pq})_{p \neq q \in \Part}$, 
  with
  $b_{pq}\in~\messageSet^*$,
 % is of length $|b_{pq}|\leq k$.  
%\item 
and 
$\TransitionAsync{}{}{}{\pp}{}$ is the least transition induced by: 
%$\forall\ p,q \in P,$ $\forall\ s_p, s_{p'} \in S_p$, $\forall \msg{a}{p}{q} \in M$, 
%\end{itemize}
\begin{prooftree}
\AxiomC{$\Transition{\state{s}{}{p}}{\send{a}{p}{q}}{\state{s'}{}{p}}{p}$}
%\AxiomC{$|b_{pq}| < k$ }
\LeftLabel{(\pp-send)}
\UnaryInfC{$\TransitionAsync{\left((\state{s}{}{1}, \dots , \state{s}{}{p}, \dots , \state{s}{}{n}), \bufferSet\right)}{\send{a}{p}{q}}{\left((\state{s}{}{1}, \dots , \state{s'}{}{p}, \dots , \state{s}{}{n}), \bufferSet\{b_{pq} / b_{pq}\cdot a\}\right)}{\pp}{}$}
\end{prooftree}

\begin{prooftree}
\AxiomC{$\Transition{\state{s}{}{q}}{\rec{a}{p}{q}}{\state{s'}{}{q}}{q}$}
\AxiomC{$b_{pq} = a\cdot b_{pq}'$ }
\LeftLabel{(\pp-rec)}
\BinaryInfC{
$\TransitionAsync{\left(( \state{s}{}{1}, \dots , \state{s}{}{q}, \dots , \state{s}{}{n}), \bufferSet\right)}{\rec{a}{p}{q}}{\left((\state{s}{}{1}, \dots , \state{s'}{}{q}, \dots ,\state{s}{}{n}), \bufferSet\{b_{pq} / b_{pq}'\}\right)}{\pp}{}$}
\end{prooftree}
\end{definition}

\begin{definition}[Mailbox system]\label{def:mailboxSystem}
Let $\Network = \left((\Automaton{p})_{p\in \Part}, \messageSet\right)$ be a network.
The mailbox system $\System{N}{\mailbox}{}$ associated with $\Network$  is the smallest binary relation
  $\TransitionAsync{}{}{}{\mailbox}{}$ over $\mailbox$-configurations such that %:
%\begin{itemize}
%\item 
for each configuration $C = \left( (\state{s}{}{p})_{p \in \Part}, \bufferSet\right)$, we have  
 $\bufferSet= (b_{p})_{p \in \Part}$ %where $b_{p}$  is of length $|b_{p}|\leq k$.
and 
% \item 
 $\TransitionAsync{}{}{}{\mailbox}{}$ is the smallest transition such that: 
%\end{itemize}
\begin{prooftree}
\AxiomC{$\Transition{\state{s}{}{p}}{\send{a}{p}{q}}{\state{s'}{}{p}}{p}$}
%\AxiomC{$|b_{q}| < k$}
\LeftLabel{(\mailbox-send)}
\UnaryInfC{
$\TransitionAsync
	{\left((\state{s}{}{1}, \dots ,\state{s}{}{p}, \dots ,\state{s}{}{n}), \bufferSet\right)}
	{\send{a}{p}{q}}
	{\left((\state{s}{}{1}, \dots ,\state{s'}{}{p}, \dots , \state{s}{}{n}), \bufferSet\{b_{q} / b_{q}\cdot a\}\right)}{\mailbox}{}$}
\end{prooftree}

\begin{prooftree}
\AxiomC{$\Transition{\state{s}{}{q}}{\rec{a}{p}{q}}{\state{s'}{}{q}}{q}$}\AxiomC{$b_{q} = a\cdot b_{q}'$ }
\LeftLabel{(\mailbox-rec)}
\BinaryInfC{$\TransitionAsync{\left((\state{s}{}{1}, \dots ,\state{s}{}{q}, \dots ,\state{s}{}{n}), \bufferSet\right)}{\rec{a}{p}{q}}{\left((\state{s}{}{1}, \dots , \state{s'}{}{q}, \dots , \state{s}{}{n}), \bufferSet \{b_{q} / b_{q}'\}\right)}{\mailbox}{}$}
\end{prooftree}
\end{definition}

%We denote with $\TransitionAsync{}{?^*}{}{com}{}$ a sequence of receptions only and with $\TransitionSend{}{\rec{a}{p}{q}}{}{com}$ a sequence of the kind $\TransitionAsync{}{\send{a}{p}{q}}{}{com}{} \TransitionAsync{}{?^*}{}{com}{}$, that corresponds to one send followed by zero or more receptions. 

In order to study the behaviour of systems, we define the set of executions and traces. An execution $\exec$ is a sequence of actions leading to a final global state and the corresponding trace $\trace$ is the projection on the send actions\footnote{In Definition~\ref{def:traces}, we decide not to take into consideration the content of buffers, differently to others papers, like \cite{finkel_synchronizability_2017}, where the authors study stable configurations, \ie configurations where buffers are empty.}.

\begin{definition}[Execution]\label{def:execution}
  Let $\Network = \left((\Automaton{p})_{p\in \Part}, \messageSet\right)$ be a network and  $\com\in \{\synch, \pp, \mailbox\}$  
 be the type of communication.  
  $\execSet{\System{\Network}{\com}{}}$ is the set of executions defined, with 
  $\configInit$ the initial configuration, $\config_n$ a final configuration and $\action_i \in 
  \actionSet{}$ for all $1 \leq i \leq n$, by:
$$\execSet{\System{\Network}{\com}{}} = \{\action_1\cdot\ \dots\ \cdot \action_n \mid  \TransitionAsync{\configInit}{\action_1}{\config_1}{\com}{}{} \TransitionAsync{}{\action_2}{}{\com}{}{} \dots \TransitionAsync{}{\action_n}{\config_n}{\com}{}{}\}.$$  
\end{definition}

%\cin{Add something on automata languages without the semantics}

%\lae{Already on the first paragraph no?}

% \begin{definition}[Language]\label{def:language}
%  Let $com \in \{\synch; \pp^k; \mailbox^k\}$ with $k\in\mathbb{N}^+\cup\{ \}$ be the type of communication and $\System{N}{com}{}= (\Network{N}, \TransitionAsync{}{}{}{com}{})$ a system. The language of  $\System{N}{com}{}$ is $\languageof{\System{N}{com}{}{}} = \traces{\System{N}{com}{}{}}\downarrow_{\sendact{}} $.
% \end{definition}

% \lae{I removed this sentense as we don't have trace/send trace anymore}
%For a synchronous system, the set of traces is the same as the set of send traces, because of the labelling of the transitions. 

% \begin{fact}
% 	\label{property:synchronousTraces}
% 	For all network $\Network$, $\languageof{\System{N}{\synch}}=  \traces{\System{N}{\synch}}$. 
% \end{fact}

If $\word$ is a word over actions, then let $ \word\projOut $ (\resp $ \word\projIn $) be its projection on only send (\resp receive) actions, let $ \word\projPeers{P} $ its projection on only actions that involve only the participants in a set $ P $, let $ \word\projPeer{p} $ its projection on receives towards $ p $ and sends from $ p $, and let $ \word\projMess $ be the word over messages that results from $ w $ by removing all $ ! $ and $ ? $.
We extend the operators $ \projOut $, $ \projIn $, $ \projPeers{P} $, $ \projPeer{p} $, and $ \projMess $ to languages, by applying them on every word of the language.
Note that, $ \word\projPeers{\Set{p}} $ is always empty, since there are no actions that involve only a single participant $ p $, whereas $ \word\projPeer{p} $ is the projection of $ w $ to its actions in that $ p $ has an active role (sender in send actions and receiver in receive actions).

\begin{definition}[Traces]\label{def:traces}
	Let $\Network = \left((\Automaton{p})_{p\in \Part}, \messageSet \right)$ be a network and  $\com\in \{\synch,~ \pp,~\mailbox\}$ be the type of communication.
	$\traceSet{\System{\Network}{\com}{}}$ is the set of traces: $$\traceSet{\System{\Network}{\com}{}} = \{\exec\projOut \mid \exec \in \execSet{\System{\Network}{\com}{}}\}.$$	
\end{definition}

A system is synchronisable if its asynchronous behaviour can be related to its synchronous one. Thus, an asynchronous system is synchronisable if its set of traces is the same as the one obtained from the synchronous system. 

\begin{definition}[Synchronisability]\label{def:synchronisability}
Let $\Network = \left((\Automaton{p})_{p\in \Part}, \messageSet\right)$ be a network and  $\com\in \{\pp,~\mailbox\}$ be the type of communication. The system $\System{\Network}{\com}{}$ is synchronisable if and only if 
$\traceSet{\System{\Network}{\com}{}} = \traceSet{\System{\Network}{\synch}{}}$. 
\end{definition} 

\paragraph*{Problems statements.} 
We define the \emph{Synchronisability Problem} as the decision problem of determining whether a given system, where all states are accepting states, is synchronisable or not. 
We also consider the \emph{Generalised Synchronisability Problem} without any constraints on the accepting states of the system.

%%%%%%%%%%%%%%%%%%%%%%%%%%
%  Post  %
%%%%%%%%%%%%%%%%%%%%%%%%%%

\section{The Generalised Synchronisability Problem is Undecidable}\label{sec:post}

% !TEX root = express.tex
The first contribution is about assessing the undecidability of the Generalised Synchronisability Problem for the \Mailbox semantics. 
This result strongly relies on the notion of accepting word.
Moreover, the entire section considers networks without any constraints on final configurations (\ie $\configFinal \subseteq \configSet$).  
%hence constraining the definition of network of communicating automata and the one of traces. For the former, we add a set $\finalGlobalStates$ of final states and traces are also modified in order to take into account only accepting executions. Namely, networks are now defined as $\Network = \left( (\Automaton{p})_{p \in \Part}, \messageSet, \finalGlobalStates\right)$, where $\finalGlobalStates$ is a set of tuples $(\state{s}{}{p})_{p\in P}$ such that $\state{s}{}{p} \in \finalStates{p}$. 
% We denote $\configFinal$ the set of configurations $\config = \left((\state{s}{}{p})_{p \in \Part}, \bufferSet\right)$ where $(\state{s}{}{p})_{p \in \Part} \in \finalGlobalStates$.
% The set of accepting executions and of traces are respectively:
% $$\accexecSet{\System{\Network}{\mailbox}{}} = \{\action_1\cdot\ \dots\ \cdot \action_n\ |\ \TransitionAsync{\configInit}{\action_1}{\config_1}{com}{}{} \TransitionAsync{}{\action_2}{}{com}{}{} \cdots \TransitionAsync{}{\action_n}{\config_n}{com}{}{}, \config_n  \in \configFinal\}, 
% $$
% $$\acctraceSet{\System{\Network}{\mailbox}{}} = \{\exec\projOut \mid \exec \in \accexecSet{\System{\Network}{com}{}}\}.$$
% The synchronisability problem then becomes 
% $\acctraceSet{\System{\Network}{\mailbox}{}} = \acctraceSet{\System{\Network}{\synch}{}}$. 
\paragraph{Post Correspondence Problem.}
We will resort to the Post Correspondence Problem (PCP) which is known to be an undecidable decision problem~\cite{post_variant_1946}, to prove that the Generalised Synchronisability Problem is undecidable. 
We will show that the encoding of a PCP instance \pcpinstance{} is not synchronisable if and only if the instance has a solution.

% \begin{definition}[Word]\label{def:word}
%  Let $\Sigma$ an alphabet, a word $w= a_1a_2\dots a_n \in \Sigma^*$  is a finite sequence of symbols, such that $a_i \in \Sigma$,  for all $1 \leq i \leq n$.

% Let $w_a \in \Sigma^*$ and $w_b \in \Sigma^*$, 
% and $m,n \in \mathbb{N}$, such that  
% $w_a = w_{a,1}w_{a,2}\dots w_{a,m}$ and 
% $w_b = w_{b,1}w_{b,2}\dots w_{b,n}$. 
% The concatenation of $w_a$ and $w_b$, 
% written $w_a\cdot w_b$, 
% corresponds to the word $w_{a,1}w_{a,2}\dots w_{a,m}w_{b,1}w_{b,2}\dots w_{b,n}$. 
% %We denote $|w|$ the size of the word $w$.
% \end{definition} 

\begin{definition}[Post Correspondence Problem]\label{def:pcp}
Let $\alphabet$ be an alphabet with at least two symbols. An instance \pcpinstance{} of the PCP consists of two finite ordered lists of the same number of non-empty words $$\Word= {w_{1}, w_{2},\ldots , w_{n}}\text{ and }\text{\Word'}= {w'_{1}, w'_{2}, \ldots , w'_{n}}$$ such that $w_{i}, w'_{i} \in \alphabet^{*}$ for all indices $1 \leq i \leq n$. 
A solution of this instance is a finite sequence of indices $\Sol = ( i_{1}, i_{2}, \dots, i_{m} )$ with $m \geq 1$ and $i_{j} \in [1,n]$ for all $1 \leq j \leq m$ such that:
$$w_{i_{1}} \cdot w_{i_{2}} \cdot \ldots  \cdot w_{i_{m}} = w'_{i_{1}} \cdot w'_{i_{2}} \cdot \ldots  \cdot w'_{i_{m}} .$$
%We can represent the input as a set of $n$ blocks: 
%\begin{center}
%\begin{tabular}{c c c c}
%	$1$ & $2$ & & $n$ \\
%	\begin{tabular}{|c|}
%	\hline $w_{1}$ \\ \hline $w'_{1}$	\\ \hline 
%	\end{tabular}	
%&
%	\begin{tabular}{|c|}
%	\hline $w_{2}$ \\ \hline $w'_{2}$	\\ \hline
%	\end{tabular}
%&
%	\begin{tabular}{c}
%	\dots  \\ 	
%	\end{tabular}
%&
%	\begin{tabular}{|c|}
%	\hline $w_{n}$ \\ \hline $w'_{n}$	\\ \hline
%	\end{tabular} \\
%\end{tabular}
%\end{center}
\end{definition}

\paragraph{Mailbox Encoding of the Post Correspondence Problem.}
The encoding in \Mailbox systems requires some care.
When an automaton is receiving messages from multiple participants, these messages are interleaved in the buffer and it is generally not possible to anticipate in which order these messages have been sent.

The encoding of an instance \pcpinstance{} of the PCP  is a parallel composition of four automata: $\Automaton{I}$, $\Automaton{W}$, $\Automaton{W'}$, and $\Automaton{L}$, where $\Automaton{I}$ sends the same indices to $\Automaton{W}$ and $\Automaton{W'}$ which in turn send the respective words to $\Automaton{L}$. $\Automaton{L}$ compare letters and, at the end of the run, its state allows to say if a solution exists. The topology and the buffer layout of the system is depicted in Figure~\ref{figure:topologyEncodeMail}. 

\begin{figure}[t]
	\centering
	\scalebox{0.9}{\begin{tikzpicture}[>=stealth,node distance=1.5cm,shorten >=1pt, every state/.style={text=black},semithick]
	\node (ai) at (-2,0) {$\Automaton{I}$};
	\node (aw) at (0,0.75) {$\Automaton{W}$};
	\node (awp) at (0,-0.75) {$\Automaton{W'}$};
	\node (al) at (2,0) {$\Automaton{L}$};
	\draw[->] (ai) -- (aw);
	\draw[->] (ai) -- (awp);
	\draw[->] (aw) -- (al);
	\draw[->] (awp) -- (al);
	\draw[->] (al) -- (ai);
\end{tikzpicture}}
	\caption{Topology of the encoding of an instance  \pcpinstance{} of the PCP} 
	\label{figure:topologyEncodeMail}
\end{figure}
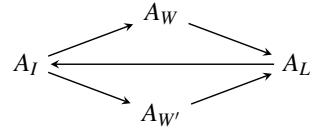

We will explain our encoding over an example. Take the following PCP instance with  $\alphabet = \{a, b\}$,  $\Word = a, b, abab$ and $\Word' = ba, baa, b$. We know that there is a solution for this instance with $\Sol = (2, 1, 3)$. 
 Figures~\ref{figure:example:encodingMailAutBI}--\ref{figure:example:encodingMailAutBL} %to~\ref{figure:example:encodingMailAutBL} 
depict the automata solving the PCP instance.

%\begin{figure}[tp]
%	\begin{subfigure}{\textwidth}
%		\centering
%		\input{Aut_BI}
%%		\scalebox{.8}{ \input{Aut_BI} }
%		\caption{Automaton $\Automaton{I}$} 
%		\label{figure:example:encodingMailAutBI} 
%	\end{subfigure}
%%	\begin{subfigure}{.5\textwidth}
%%		\centering
%%		\scalebox{.7}{ \input{Aut_BL} }
%%		\caption{Automaton $\Automaton{L}$} 
%%		\label{figure:example:encodingMailAutBL} 
%%	\end{subfigure}
%	\begin{subfigure}{\textwidth}
%		\centering
%		\input{Aut_BW}
%%		\scalebox{.8}{\input{Aut_BW}}
%		\caption{Automaton $\Automaton{W}$} 
%		\label{figure:example:encodingMailAutBW}
%	\end{subfigure}
%	\begin{subfigure}{\textwidth}
%		\centering
%		\input{Aut_BWb}
%%		\scalebox{.8}{ \input{Aut_BWb} }
%		\caption{Automaton $\Automaton{W'}$} 
%		\label{figure:example:encodingMailAutBWb} 
%	\end{subfigure}
%	\caption{Network $\Encode{\Word,\Word'}{\mailbox}$}
%	\label{figure:example:encodingMail}
%\end{figure}

Automaton $\Automaton{I}$ guesses the sequence of indexes and sends it to both $\Automaton{W}$ and $\Automaton{W'}$. The message $\$$ is used to signal the end of the sequence.
Automaton $\Automaton{W}$ and  $\Automaton{W'}$ receive indexes  from $\Automaton{I}$ and send the corresponding sequences of letters to $\Automaton{L}$. At the reception of message $\$$, they send messages $end$ to $\Automaton{L}$.
%The primed version of the letters is used to distinguish whether the letter is produced by $\Automaton{W}$ or $\Automaton{W'}$.  
Automaton $\Automaton{L}$ checks whether the sequences of letters produced by $\Automaton{W}$ and $\Automaton{W'}$ coincide. Letters from  $\Automaton{W}$ and $\Automaton{W'}$ need to be alternate and are read in turn, and the additional receptions are used to make the system synchronisable and to recognize errors (\ie sequences that are not a solution).
If all comparisons succeed, included the  $end$  messages, then $\Automaton{L}$  sends message $ok$ that is not received by any participant and ends up in the unique accepting state.

More formally, we define the encoding as follows. 
\begin{definition}[Encoding of PCP in mailbox system]\label{def:encodingMail}
 Let $(\Word,\Word')$ be a PCP instance over $\alphabet$. 
 %where \mbox{$\Word = w_1, \ldots, w_n$} and $\Word'=w'_1, \dots, w'_n$. 
The encoding of $(\Word,\Word')$ is the network $\Encode{\Word,\Word'}{\mailbox} =  \left((\Automaton{p})_{p \in \Part}, \messageSet\right)$ where:
\begin{itemize}
	\item $\Part = \left\{ I, W, W', L\right\} $
	\item $\messageSet = \left\{ \msg{i}{I}{W}, \msg{i}{I}{W'} \mid i \in [1,n]\right\} \cup \left\{ \msg{\alpha}{W}{L},\msg{\alpha}{W'}{L} \mid \alpha \in \Sigma \right\} \cup M $ with\\
		$ M = \left\{ \msg{\$}{I}{W}, \msg{\$}{I}{W'}, \msg{end}{W}{L}, \msg{end}{W'}{L}, \msg{ok}{L}{I}\right\} $
\begin{figure}[t]
	\centering
	\begin{tikzpicture}[>=stealth,node distance=1.5cm,shorten >=1pt,
    every state/.style={minimum size = 0pt},semithick]

      \node[state,initial,initial text={}]   at (0,0) (q0)  {$0$};
      \node[state] at (-1.5,1.7) (q1)  {$1$};
      \node[state] at (-1.5,-1.7) (q2)  {$2$};
      \node[state] at (1.5,-1.7) (q3)  {$3$};
      \node[state] at (2, 0) (qd)  {$\$$};
      \node[state, accepting] at (4, 0) (qd')  {$\$'$};

	\path (q0) edge[->, bend right = 10] node [sloped, above] {$\send{1}{I}{W}$} (q1);
	\path (q1) edge[->, bend right = 10] node [sloped, below] {$\send{1}{I}{W'}$} (q0);    
	
	\path (q0) edge[->, bend left=10] node [sloped, below] {$\send{2}{I}{W}$} (q2);
	\path (q2) edge[->, bend left=10] node [sloped, above] {$\send{2}{I}{W'}$} (q0);    
	
	\path (q0) edge[->, bend left=10] node [above,sloped] {$\send{3}{I}{W}$} (q3);
	\path (q3) edge[->, bend left=10] node [below, sloped] {$\send{3}{I}{W'}$} (q0);    
	
	\path (q0) edge[->] node [above] {$\send{\$}{I}{W}$} (qd);
	\path (qd) edge[->] node [above] {$\send{\$}{I}{W'}$} (qd');

\end{tikzpicture} 
	\caption{Automaton $\Automaton{I}$} 
	\label{figure:example:encodingMailAutBI}
\end{figure}
	\item  $\Automaton{I} = (\stateSet{I}, \state{s}{0}{I}, \messageSet, \rightarrow_I, \finalStates{\Automaton{I}} )$ where $\stateSet{I} = \left\{\state{q}{0}{}, \state{q}{\$}{}, \state{q}{\$'}{}\right\} \cup \left\{ \state{q}{i}{} \mid i \in [1,n] \right\}$, $\state{s}{0}{I} = \state{q}{0}{}$, $\finalStates{\Automaton{I}} = \Set{\state{q}{\$'}{}}$ and 
		$$ 
		\rightarrow_I  = \ \left\{ \TransitionAsync
					{\state{q}{0}{}}
					{\send{i}{I}{W}}
					{\state{q}{i}{}}{}{},
					\TransitionAsync	
					{\state{q}{i}{}}
					{\send{i}{I}{W'}}
					{\state{q}{0}{}}{}{}			
					\mid i \in [1,n]\right\} 
		 \cup \left\{ \TransitionAsync
					{\state{q}{0}{}}
					{\send{\$}{I}{W}}
					{\state{q}{\$}{}}{}{},
		 \TransitionAsync
					{\state{q}{\$}{}}
					{\send{\$}{I}{W'}}
					{\state{q}{\$'}{}}{}{} \right\} $$
\begin{figure}[t]
	\centering
	\begin{tikzpicture}[>=stealth,node distance=1.5cm,shorten >=1pt,
  every state/.style={minimum size = 0pt},semithick]

      \node[state,initial,initial text={}]   at (0.2,0) (q0)  {$0$};
      \node[state] at (2,2) (q10)  {$1,0$};
      \node[state] at (2,0) (q20)  {$2,0$};
      \node[state] at (2,-2) (q30)  {$3,0$};
      \node[state] at (4,-2) (q31)  {$3,1$};
      \node[state] at (6,-2) (q32)  {$3,2$};
      \node[state] at (8,-2) (q33)  {$3,3$};
      \node[state] at (6,0) (qf)  {$f$};
      \node[state] at (6,2) (qd)  {$\$$};
      \node[state, accepting] at (8,2) (qe)  {$e$};
      %\node[state, scale=0.8] at (15.2,0) (qeb) {};

    \path (q0) edge[->, bend left] node [left] {$\rec{1}{I}{W}$} (q10);
    \path (q0) edge[->] node [above] {$\rec{2}{I}{W}$} (q20);
    \path (q0) edge[->, bend right] node [left] {$\rec{3}{I}{W}$} (q30);

    \path (q10) edge[->, bend left=25] node [sloped, above] {$\send{a}{W}{L}$} (qf);
    \path (qf) edge[->] node [sloped, above] {$\rec{1}{I}{W}$} (q10);

    \path (q20) edge[->, bend left=10] node [sloped, above] {$\send{b}{W}{L}$} (qf);
    \path (qf) edge[->, bend left=10] node [sloped, below] {$\rec{2}{I}{W}$} (q20);
    
    \path (q30) edge[->] node [below] {$\send{a}{W}{L}$} (q31);
    \path (q31) edge[->] node [below] {$\send{b}{W}{L}$} (q32);
    \path (q32) edge[->] node [below] {$\send{a}{W}{L}$} (q33);
    \path (q33) edge[->, bend right] node [sloped, above] {$\send{b}{W}{L}$} (qf);
    \path (qf) edge[->, bend right=5] node [pos = .3, sloped, below] {$\rec{3}{I}{W}$} (q30);
    
    \path (qf) edge[->] node [right] {$\rec{\$}{I}{W}$} (qd);
    \path (qd) edge[->] node [above] {$\send{end}{W}{L}$} (qe);

\end{tikzpicture} 
	\caption{Automaton $\Automaton{W}$} 
	\label{figure:example:encodingMailAutBW}
\end{figure}
\begin{figure}[t]
	\centering
	\begin{tikzpicture}[>=stealth,node distance=1.5cm,shorten >=1pt,
  every state/.style={minimum size = 0pt},semithick]

      \node[state,initial,initial text={}]   at (0.1,0) (q0)  {$0$};
      \node[state] at (2,2) (q10)  {$1,0$};
      \node[state] at (4.1, 2) (q11) {$1,1$};
      \node[state] at (2,-2) (q20)  {$2,0$};
      \node[state] at (4.1, -2) (q21) {$2,1$};
      \node[state] at (6.5, -2) (q22) {$2,2$};
      \node[state] at (2,0)  (q30)  {$3,0$};
      \node[state] at (6.5,0) (qf)  {$f$};
      \node[state] at (6.5,2) (qd)  {$\$$};
      \node[state, accepting] at (8.7,2) (qe)  {$e$};
      %\node[state, scale=0.8] at (15.4,0) (qeb) {};

    \path (q0) edge[->, bend left] node [left] {$\rec{1}{I}{W'}$} (q10);
    \path (q0) edge[->, bend right] node [left] {$\rec{2}{I}{W'}$} (q20);
    \path (q0) edge[->] node [above] {$\rec{3}{I}{W'}$} (q30);

    \path (q10) edge[->] node [above] {$\send{b}{W'}{L}$} (q11);
    \path (q11) edge[->, bend left = 15] node [above, sloped] {$\send{a}{W'}{L}$} (qf);
    \path (qf) edge[->, bend left = 5] node [above, sloped] {$\rec{1}{I}{W'}$} (q10);
    
    \path (q20) edge[->] node [sloped, below] {$\send{b}{W'}{L}$} (q21);
    \path (q21) edge[->] node [below] {$\send{a}{W'}{L}$} (q22);
    \path (q22) edge[->] node [right] {$\send{a}{W'}{L}$} (qf);
    \path (qf) edge[->, bend right=5] node [below, sloped] {$\rec{2}{I}{W'}$} (q20);

    \path (q30) edge[->, bend left = 5] node [above, sloped, pos= .4] {$\send{b}{W'}{L}$} (qf);
    \path (qf) edge[->, bend left = 5] node [below, sloped, pos = .6] {$\rec{3}{I}{W'}$} (q30);
    
    \path (qf) edge[->] node [right] {$\rec{\$}{I}{W'}$} (qd);
    \path (qd) edge[->] node [above] {$\send{end}{W'}{L}$} (qe);

\end{tikzpicture} 
	\caption{Automaton $\Automaton{W'}$} 
	\label{figure:example:encodingMailAutBWb}
\end{figure}
	\item  $\Automaton{W} = (S_W, \state{s}{0}{W}, \messageSet, \rightarrow_W, \finalStates{\Automaton{W}})$ where 
$S_W  = \left\{\state{q}{0}{}, \state{q}{f}{}, \state{q}{\$}{}, \state{q}{e}{}\right\} \cup\left\{q_{i,j} \mid i \in [1,n] \wedge j \in [0, \length{w_i}-1 ] \right\}$,\linebreak $\state{s}{0}{W}  = \state{q}{0}{}$, $\finalStates{\Automaton{W}} = \Set{\state{q}{e}{}}$ and 
		\begin{align*}
		 \rightarrow_W ={}& \left\{ \Transition
				{\state{q}{0}{}}
				{\rec{i}{I}{W}}
				{\state{q}{i,0}{}}
				{},
				\Transition
				{\state{q}{f}{}}
				{\rec{i}{I}{W}}
				{\state{q}{i,0}{}}
				{}
				\mid i \in [1,n] \right\}  
		 \cup \left\{ \Transition
				{\state{q}{i,|w_i|-1}{}}
				{\send{\alpha}{W}{L}}
				{\state{q}{f}{}}
				{} \mid \alpha = w_{i,|w_i|} \wedge i \in [1,n] \right\}  \\ 
		&{}\cup \left\{ \Transition
				{\state{q}{i,j}{}}
				{\send{\alpha}{W}{L}}
				{\state{q}{i,j+1}{}}
				{}\mid   \alpha = w_{i,j+1} \wedge i \in [1,n] \wedge j \in [1, |w_i|-2]  \right\}\\ 
		&{}\cup \left\{ \Transition
				{\state{q}{f}{}}
				{\rec{\$}{I}{W}}
				{\state{q}{\$}{}}
				{}, \Transition
				{\state{q}{\$}{}}
				{\send{end}{W}{L}}
				{\state{q}{e}{}}
				{} \right\}
		\end{align*}
\begin{figure}[t]
	\centering
	\begin{tikzpicture}[>=stealth,node distance=1.5cm,shorten >=1pt,
    every state/.style={minimum size = 0pt}, semithick]

      \node[state,initial,initial text={}]   at (0,0) (q0)  {$0$};
      \node[state] at (2.5,1.5) (qa)  {$a$};
      \node[state] at (2.5,-1.5) (qb)  {$b$};
      \node[state] at (7,0) (q*)  {$*$};
      \node[state] at (3.5, 3.5) (qe)  {$e$};
      \node[state] at (6, 4) (qe')  {$e'$};
      \node[state, accepting] at (8, 4) (qok)  {$ok$};
      %\node[state, scale=0.8] at (12.8,4) (qokb) {};
      
	\path (q0) edge[->, bend left=10] node [above, sloped] {$\rec{a}{W}{L}$} (qa);
	\path (qa) edge[->, bend left=10] node [below, sloped] {$\rec{a}{W'}{L}$} (q0);    
	
	\path (q0) edge[->, bend left=10] node [above, sloped, pos = .6] {$\rec{b}{W}{L}$} (qb);
	\path (qb) edge[->, bend left=10] node [below, sloped] {$\rec{b}{W'}{L}$} (q0);    
	
	\path (q0) edge[->, bend left=40] node [left] {$\rec{end}{W}{L}$} (qe);
	\path (qe) edge[->] node [above] {$\rec{end}{W'}{L}$} (qe');
	\path (qe') edge[->] node [above] {$\send{ok}{L}{I}$} (qok);
	
	\path (q0) edge[->, bend right =90] node [below, scale = 0.9] 
		{\begin{tabular}{l l } 			
				$\rec{a}{W'}{L}$ & \\ 
				$\rec{b}{W'}{L}$ & $\rec{end}{W'}{L}$ \\ 
		\end{tabular}}	
	(q*); 
	
	\path (qb) edge[->, bend right =20] node [above, scale = 0.9, pos = .4] 
		{\begin{tabular}{l l} 
				$\rec{a}{W}{L}$ & $\rec{a}{W'}{L}$ \\ 
				$\rec{b}{W}{L}$  & $\rec{end}{W'}{L}$\\ 
				$\rec{end}{W}{L}$  \\ 
		\end{tabular}}	
	(q*); 
	
	\path (qa) edge[->, bend left =20] node [above, scale = 0.9] 
		{\begin{tabular}{l l} 
				$\rec{a}{W}{L}$ &   \\ 
				$\rec{b}{W}{L}$ &  $\rec{b}{W'}{L}$ \\ 
				$\rec{end}{W}{L}$ & $\rec{end}{W'}{L}$\\	
		\end{tabular}}	
	(q*); 
	
	\path (qe) edge[->, bend left =40] node [right, scale = 0.9] 
		{\begin{tabular}{l} 
				$\rec{a}{W'}{L}$  \\
				$\rec{b}{W'}{L}$  \\
				 \\
		\end{tabular}}	
	(q*); 
	
	\path (q*) edge[->, loop right ] node [right, scale = 0.9] {$*$}
		% {\begin{tabular}{l l} 
		% 		$\rec{a}{W}{L}$  & $\rec{a'}{W'}{L}$  \\
		% 		$\rec{b}{W}{L}$ & $\rec{b'}{W'}{L}$  \\
		% 		$\rec{end}{W}{L}$ & $\rec{end'}{W'}{L}$  \\
		% \end{tabular}}	
	();

\end{tikzpicture} 
	\caption{Automaton $\Automaton{L}$} 
	\label{figure:example:encodingMailAutBL} 
\end{figure}
	\item $\Automaton{L} = (S_L, \state{s}{0}{L}, M, \rightarrow_L, \finalStates{\Automaton{L}})$ where $S_L = \{\state{q}{0}{}, \state{q}{e}{}, \state{q}{e'}{}, \state{q}{ok}{}, \state{q}{*}{}\} \cup \{\state{q}{\alpha}{} | \alpha \in \Sigma \}$, $\state{s}{0}{L} = \state{q}{0}{}$, $\finalStates{\Automaton{L}} = \Set{\state{q}{ok}{}}$ and 
		\begin{align*} \rightarrow_L =
		&\ \left\{ \Transition
				{\state{q}{0}{}}
				{\rec{\alpha}{W}{L}}
				{\state{q}{\alpha}{}}
				{},
				\Transition
				{\state{q}{\alpha}{}}
				{\rec{\alpha}{W'}{L}}
				{\state{q}{0}{}}
				{}\mid \alpha \in \Sigma \right\}   
		 \cup \left\{ \Transition
				{\state{q}{\alpha}{}}
				{\rec{\beta}{W'}{L}}
				{\state{q}{*}{}}
				{}\mid \beta \in \Sigma \cup \{end\} \wedge \beta \neq \alpha \right\} \\%
		&\ \cup \left\{ 
				 \Transition
				{\state{q}{0}{}}
				{\rec{\alpha}{W'}{L}}
				{\state{q}{*}{}}
				{} \mid \alpha \in \Sigma \cup \{end\}\right\} \cup
				\left\{
				\Transition
				{\state{q}{\alpha}{}}
				{\rec{\beta}{W}{L}}
				{\state{q}{*}{}}
				{} \mid \beta \in \Sigma \cup \{end\} \right\} \\ 
		&\ \cup \left\{ \Transition
				{\state{q}{e}{}}
				{\rec{\alpha}{W}{L}}
				{\state{q}{*}{}}
				{}\ |\ \alpha \in \Sigma \right\} \cup \left\{ \Transition
				{\state{q}{e}{}}
				{\rec{\alpha}{W'}{L}}
				{\state{q}{*}{}}
				{}\ |\ \alpha \in \Sigma \right\} \\ 
		&\ \cup \left\{ \Transition
				{\state{q}{*}{}}
				{\rec{\alpha}{X}{L}}
				{\state{q}{*}{}}
				{}\ |\ \alpha \in \Sigma \cup \{end\} \wedge X \in \{W, W'\}\right\} \\ 
		&\ \cup \left\{  
				\Transition
				{\state{q}{0}{}}
				{\rec{end}{W}{L}}
				{\state{q}{e}{}}
				{},
				 \Transition
				{\state{q}{e}{}}
				{\rec{end}{W'}{I}}
				{\state{q}{e'}{}}
				{}, 
				\Transition
				{\state{q}{e'}{}}
				{\send{ok}{L}{I}}
				{\state{q}{ok}{}}
				{} \right\}
		\end{align*}
%\item $F = \{(\state{q}{\$'}{I}, \state{q}{e}{W},\state{q}{e}{W'}, \state{q}{ok}{L})\}$.
\end{itemize}
$\Automaton{W'}$ is defined as $\Automaton{W}$ but considering $\Word'$ instead of $\Word$. 
\end{definition}

It is easy to see that in the synchronous semantics the system cannot reach any final configuration, because of message $ok$ which cannot be sent since it cannot be received. The set of traces of the synchronous system is indeed empty. 

 \begin{lemma}  \label{lem:emptyMail}
Let \pcpinstance{} an instance of PCP and $\Network  = \Encode{\Word,\Word'}{\mailbox}$ its encoding into communicating automata. Then $\traceSet{\System{\Network}{\synch}{}} = \emptyset$.
 \end{lemma}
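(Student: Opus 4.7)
The plan is to trace the obstruction back to the fact that the $ok$ message, though sent by $\Automaton{L}$, has no matching receiver in $\Automaton{I}$. Since a trace is by definition the send-projection of an execution, and an execution must terminate in a final configuration (\ie every participant must be in a final local state simultaneously), it suffices to show that no synchronous run can bring all four automata into their respective final states.

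First I would unfold the definition of $\configFinal$ for this network: a final \synch-configuration requires $\state{s}{}{I}=\state{q}{\$'}{}$, $\state{s}{}{W}=\state{s}{}{W'}=\state{q}{e}{}$, and $\state{s}{}{L}=\state{q}{ok}{}$. Next I would inspect $\Automaton{L}$ in Definition~\ref{def:encodingMail}: the unique incoming transition of $\state{q}{ok}{}$ is $\Transition{\state{q}{e'}{}}{\send{ok}{L}{I}}{\state{q}{ok}{}}{L}$. Hence every execution that ends in a final configuration must contain at least one occurrence of the send action $\send{ok}{L}{I}$.

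The key step is then to apply rule (\synch-send) from Definition~\ref{def:synchronousSystem}: firing $\send{ok}{L}{I}$ in the synchronous semantics requires a companion transition $\Transition{\state{s}{}{I}}{\rec{ok}{L}{I}}{\state{s'}{}{I}}{I}$ in $\Automaton{I}$. A direct inspection of $\rightarrow_I$ in Definition~\ref{def:encodingMail} shows that all transitions of $\Automaton{I}$ are sends of the form $\send{i}{I}{W}$, $\send{i}{I}{W'}$, $\send{\$}{I}{W}$ or $\send{\$}{I}{W'}$; in particular no transition of $\Automaton{I}$ carries the label $\rec{ok}{L}{I}$. Therefore the rule (\synch-send) can never be instantiated with message $\msg{ok}{L}{I}$, and $\state{q}{ok}{}$ is unreachable in $\System{\Network}{\synch}{}$.

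Consequently $\execSet{\System{\Network}{\synch}{}} = \emptyset$, and applying Definition~\ref{def:traces} yields $\traceSet{\System{\Network}{\synch}{}} = \emptyset$. I do not foresee a real obstacle here: the argument is a one-line unreachability observation, and the only thing to be careful about is quoting Definitions~\ref{def:execution} and \ref{def:traces} to justify that having no execution reaching a final configuration indeed collapses the trace set to $\emptyset$ (as opposed to, say, containing $\emptyword$).
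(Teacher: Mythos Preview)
Your proposal is correct and follows exactly the argument the paper sketches in the sentence preceding the lemma: the only way for $\Automaton{L}$ to reach its final state $\state{q}{ok}{}$ is via $\send{ok}{L}{I}$, which the (\synch-send) rule cannot fire because $\Automaton{I}$ has no $\rec{ok}{L}{I}$-transition. Your added care about $\emptyword$ is appropriate and harmless here, since the initial configuration is not final.
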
 

In the \Mailbox semantics, message $ok$ can be sent only if the encoded instance of PCP has a solution.
If the instance of PCP has no solution, then the \Mailbox system is unable to reach the final configuration and the set of traces is empty.
Summing up, the set of traces is not empty if and  only if there exists a solution to the corresponding PCP instance.

\begin{lemma} \label{lem:noEmptyMail}
For every instance \pcpinstance{} of PCP, where $\Network  = \Encode{\Word,\Word'}{\mailbox}$, \pcpinstance{} has a solution if and only if $\traceSet{\System{N}{\mailbox}{}}  \neq \emptyset$.
\end{lemma}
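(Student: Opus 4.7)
The plan is to prove both implications by directly analysing the executions of $\System{\Network}{\mailbox}{}$ that reach a final configuration. The key structural observation is that $\Automaton{L}$ reaches its unique accepting state $\state{q}{ok}{}$ only via a path $\state{q}{0}{} \to \cdots \to \state{q}{0}{} \to \state{q}{e}{} \to \state{q}{e'}{} \to \state{q}{ok}{}$ that never enters the trap state $\state{q}{*}{}$. Hence in any successful execution, $\Automaton{L}$ must drain its mailbox as a sequence of pairs of \emph{matching} letters, where the first component of each pair comes from $W$ and the second from $W'$, followed by an $end$ from $W$ and then an $end$ from $W'$.

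For the ($\Leftarrow$) direction, we take any $\exec \in \execSet{\System{\Network}{\mailbox}{}}$ and analyse each participant in turn. The only way for $\Automaton{I}$ to reach $\state{q}{\$'}{}$ is by producing a send sequence $\send{i_1}{I}{W} \cdot \send{i_1}{I}{W'} \cdots \send{i_m}{I}{W} \cdot \send{i_m}{I}{W'} \cdot \send{\$}{I}{W} \cdot \send{\$}{I}{W'}$ for some $i_1, \dots, i_m \in [1,n]$, so $W$ and $W'$ each receive exactly the list $i_1, \dots, i_m, \$$ in that order. Note that $m \geq 1$, since $\Automaton{W}$ consumes $\$$ only in $\state{q}{f}{}$, which requires having consumed at least one index first. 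Consequently the letters $W$ sends to $L$ spell $w_{i_1} \cdots w_{i_m}$ followed by $end$, and those from $W'$ spell $w'_{i_1} \cdots w'_{i_m}$ followed by $end$. Since $L$'s mailbox is FIFO and its content is forced by the observation above to consist of alternating matching pairs from $W$ and $W'$, projecting on the two senders yields $w_{i_1} \cdots w_{i_m} = w'_{i_1} \cdots w'_{i_m}$, i.e.\ a PCP solution.

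For the ($\Rightarrow$) direction, given a solution $(i_1, \dots, i_m)$ with $u = w_{i_1} \cdots w_{i_m} = w'_{i_1} \cdots w'_{i_m} = \alpha_1 \cdots \alpha_k$, we construct an accepting execution explicitly. First let $\Automaton{I}$ run to completion, filling the mailboxes of $W$ and $W'$ with $i_1, \dots, i_m, \$$. Then, for $j = 1, \dots, k$ in turn, let $\Automaton{W}$ (fetching the next index from its mailbox whenever it is at $\state{q}{0}{}$ or $\state{q}{f}{}$) send $\alpha_j$ to $L$, immediately followed by $\Automaton{W'}$ sending $\alpha_j$ to $L$. Finally let $W$ and $W'$ each consume $\$$ and send $end$, and let $\Automaton{L}$ drain its mailbox, reaching $\state{q}{ok}{}$. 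Every transition is enabled because the PCP equation makes $W$ and $W'$ produce exactly the same letter sequence $\alpha_1 \cdots \alpha_k$, and all participants terminate in their accepting states, so the corresponding trace witnesses $\traceSet{\System{\Network}{\mailbox}{}} \neq \emptyset$.

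The hard part will be the rigorous bookkeeping in the ($\Leftarrow$) direction: one must rule out any exotic interleaving of sends to $L$ that would allow $\Automaton{L}$ to avoid $\state{q}{*}{}$ without the two concatenations actually coinciding. The crucial observation is that $L$'s mailbox content is exactly the global order of sends to $L$, and its sub-projections onto $W$ and $W'$ are exactly the letter sequences emitted by these two automata, which are themselves fully determined by the indices received from $\Automaton{I}$. The alternation-and-matching constraint imposed by $\Automaton{L}$'s transitions thus transfers unchanged into the equality $w_{i_1} \cdots w_{i_m} = w'_{i_1} \cdots w'_{i_m}$.
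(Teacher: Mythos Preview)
Your proposal is correct and follows essentially the same approach as the paper's proof: both directions proceed by constructing an explicit accepting execution from a solution, and by analysing the forced structure of any accepting execution (the path of $\Automaton{L}$ to $\state{q}{ok}{}$, the alternating matching pairs in its mailbox, and the indices emitted by $\Automaton{I}$) to extract a solution. Your write-up is in fact somewhat more careful than the paper's on the bookkeeping (e.g.\ the justification that $m \geq 1$ via the position of $\rec{\$}{I}{W}$ in $\Automaton{W}$, and the explicit appeal to FIFO order on $L$'s mailbox), but the underlying argument is the same.
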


\begin{proof}
Let \pcpinstance{} be a PCP instance and $\Network  = \Encode{\Word,\Word'}{\mailbox}$.
\begin{compactdesc}
\item[$\Rightarrow$] We show that if \pcpinstance{} has a solution, then 
 $\traceSet{\System{N}{\mailbox}{}}  \neq \emptyset$.
 Let $\Sol_{\pcpinstance{}} = (i_1, i_2, \ldots ,  i_m)$ be a solution of \pcpinstance{}.
 Let $w= a_1 \dots a_n$ be the word generated from the sequence of indices.
 From Definition~\ref{def:encodingMail}, it is easy to see that the following execution $t$ is possible and that it leads to a final configuration with the final global state $(\state{q}{\$'}{I}, \state{q}{e}{W},\state{q}{e}{W'}, \state{q}{ok}{L})$:

\begin{align}
t = & \send{i_1}{I}{W} \cdot \send{i_1}{I}{W'}\cdot 
\ldots \cdot 
\send{i_m}{I}{W}\cdot \send{i_m}{I}{W'}
\cdot
\send{\$}{I}{W}\cdot \send{\$'}{I}{W'} \label{one}\\
&\cdot
\send{a_1}{W}{L}\cdot \send{a_1'}{W'}{L}
\cdot \ldots \cdot 
\send{a_n}{W}{L}\cdot \send{a_n'}{W'}{L}
\cdot
\send{end}{W}{L}\cdot \send{end}{W'}{L} \label{two}\\
&\cdot 
\send{ok}{L}{I} \label{three}
\end{align}
Part~(\ref{one}) consists of the indices sent by automaton 
\Automaton{I} in turn to the automata \Automaton{W} and \Automaton{W'}, including the messages $\$, \$'$ that are used to signal the end of the sequence.
Part~(\ref{two}) contains the letters of word $w$ sent in turn by  
\Automaton{W} and \Automaton{W'} upon reception of the corresponding indices to \Automaton{L}. Since we are considering mailbox communication here, note that messages from \Automaton{W} and \Automaton{W'} must alternate.  
Finally, automaton \Automaton{L} having matched all the words from \Automaton{W} and \Automaton{W'}, including the final $end$ messages is able to send the last message $ok$, part~(\ref{three}).
Hence  $t \in \traceSet{\System{N}{\mailbox}{}}$.
\item[$\Leftarrow$] 
Conversely, we show that if $t \in \traceSet{\System{N}{\mailbox}{}}$, then there is a solution to \pcpinstance{}.
Since $t \in \traceSet{\System{N}{\mailbox}{}}$, $t$ is the projection on send messages of an accepting execution $t' \in \execSet{\System{\Network}{\mailbox}{}}$.
By construction, to reach state $\state{q}{ok}{L}$ 
 we know that $t' = t_1 \cdot \rec{end}{W}{L}\cdot \rec{end}{W'}{L} \cdot \send{ok}{L}{I} \send{ok}{L}{I}$. 
 With a similar reasoning, to reach states $\state{q}{e}{W}$ and $\state{q}{e}{W'}$,  $t_1\projOut = t_2\projOut  \cdot \send{end}{W}{L} \cdot \send{end}{W'}{L}$.
 This also entails that there has been at least one index sent by automaton \Automaton{I} (both to \Automaton{W} and \Automaton{W'}).
 In turn, upon reception of the corresponding index,  \Automaton{W} and \Automaton{W'} send the corresponding letters to \Automaton{L}. 
 The sequence can only be accepted if letters are queued in order: one letter from  \Automaton{W} followed by the same letter from \Automaton{W'}.
Hence if we take the projection of $t$ on the actions of \Automaton{I} we obtain a sequence of indices that represent a solution to \pcpinstance{}. 
\qedhere
 \end{compactdesc}
\end{proof}

Therefore, the system is synchronisable if and only if the encoded instance does not have solution.  
%
%\begin{lemma}\label{lem:diffMail}
%For all instance (W,W') of PCP where $\Encode{W,W'}{mail}= \Network{N}$, (W,W') has a solution if and only if $\Language{\System{N}{0}{}} \neq \Language{\System{N}{*-1}{\infty}}$. 
%\end{lemma}
%
%Thus, from these lemmas, we can establish the undecidability of synchronizability for mailbox systems. 
%
\begin{theorem}\label{theorem:mail}
The Generalised Synchronisability Problem is undecidable for mailbox systems. 
\end{theorem}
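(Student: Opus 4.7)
The plan is to reduce the Post Correspondence Problem to the complement of the Generalised Synchronisability Problem, assembling \cref{lem:emptyMail,lem:noEmptyMail} that have just been established. Given an arbitrary PCP instance $\pcpinstance{}$, the first step is to build the network $\Network = \Encode{\Word,\Word'}{\mailbox}$ from \cref{def:encodingMail}, which is effectively and uniformly constructible from $\pcpinstance{}$. By \cref{def:synchronisability}, $\System{\Network}{\mailbox}{}$ is synchronisable iff $\traceSet{\System{\Network}{\mailbox}{}} = \traceSet{\System{\Network}{\synch}{}}$, so the task reduces to characterising this equality in terms of $\pcpinstance{}$.

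The second step is a short chain of equivalences. By \cref{lem:emptyMail}, $\traceSet{\System{\Network}{\synch}{}} = \emptyset$, so the above equality collapses to $\traceSet{\System{\Network}{\mailbox}{}} = \emptyset$. By \cref{lem:noEmptyMail}, $\traceSet{\System{\Network}{\mailbox}{}} \neq \emptyset$ iff $\pcpinstance{}$ has a solution. Combining the two, $\System{\Network}{\mailbox}{}$ is synchronisable iff $\pcpinstance{}$ has no solution. Hence any algorithm deciding the Generalised Synchronisability Problem for mailbox systems yields a decision procedure for the complement of PCP, contradicting the undecidability of PCP; the theorem follows.

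Since the two preceding lemmas already do the heavy lifting, the theorem itself is a routine chaining argument and I expect no obstacle at this final step. The conceptual difficulty lies entirely in the reduction that has just been carried out: designing $\Automaton{L}$ so that any deviation from the alternating letter pattern forced by the mailbox ordering sends it to the sink $\state{q}{*}{}$, and arranging for $\send{ok}{L}{I}$ to have no matching reception so that \cref{lem:emptyMail} holds. This is precisely why the argument essentially exploits the single accepting state $\state{q}{ok}{L}$ of $\Automaton{L}$, and why it cannot be adapted to the all-states-accepting setting studied in the remainder of the paper, where $\traceSet{\System{\Network}{\synch}{}}$ would no longer be empty and the chain of equivalences would break at its very first link.
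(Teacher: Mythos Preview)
Your argument is correct and matches the paper's own proof: both combine \cref{lem:emptyMail} and \cref{lem:noEmptyMail} to establish that $\System{\Encode{\Word,\Word'}{\mailbox}}{\mailbox}{}$ is synchronisable iff \pcpinstance{} has no solution, yielding undecidability by reduction from PCP. The paper phrases this as two implications on the existence of a solution, while you phrase it as a chain of equivalences collapsing via $\traceSet{\System{\Network}{\synch}{}}=\emptyset$, but the content is identical.
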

\begin{proof}
Let \pcpinstance{} be an instance of PCP. 
\begin{compactdesc}
\item[$\Rightarrow$] 
If \pcpinstance{} has a solution, then  by Lemma~\ref{lem:noEmptyMail} 
$\traceSet{\System{N}{\mailbox}{}} \neq \emptyset$ and 
by Lemma~\ref{lem:emptyMail} 
$\traceSet{\System{N}{\synch}{}} = \emptyset$.
Hence, the system is not synchronisable.
\item[$\Leftarrow$] 
Conversely, if \pcpinstance{} has no solution, then by Lemma \ref{lem:noEmptyMail} $\traceSet{\System{N}{\mailbox}{}} = \emptyset$ and $\traceSet{\System{N}{\synch}{}} = \emptyset$ by Lemma~\ref{lem:emptyMail}. Hence the system is synchronisable.
\qedhere
\end{compactdesc} 
\end{proof}

%
%
%
%In mailbox communication, we can not work without the notion of global final state, as in peer-to-peer communication. Synchronizability without global final states for mailbox systems remains an open problem. 
%
%%Indeed, the asynchronous language of a mailbox system can be bigger than its synchronous language, even if there is no solution, because indices can be send even if the previous word is not finish to be sent. To counteract this, we can add sink states that allow $\Automaton{B}{W}$ and $\Automaton{B}{W'}$ to receive message all the time but also to send any messages (see \Note{figure a ajouter}). In a such system, we have a synchronizable system if the instance encoded has no solution. However, we have also a synchronizable system if the instance has a solution. Then, we have to be able to distinguish cases where messages have been sent from sink states (and it does not exist solution) and cases where messages have been sent in a correct way (and it exists a solution). This is equivalent to know in which state is the system at the end of the run, from where the necessity of global final state. 
%%Thus, the decidability of Synchronizability Problem without global final state stay an open problem. 
%
%

%%%%%%%%%%%%%%%%%%%%%%%%%%
%  tree-like topologies  %
%%%%%%%%%%%%%%%%%%%%%%%%%%

\section{Synchronisability of \MAILBOX Communication for Tree-like Topologies}
\label{sec:decidability}
We are interested in the Synchronisability Problem, where automata have no final states. Notice that this is equivalent of having automata where all states are final. Thus, we  consider networks where all configurations are final configurations \mbox{($\configFinal = \configSet$).} 
The encoding in Section \ref{sec:post} cannot be used as it strongly relies on the existence of special final configurations that can only be reached in the asynchronous (\Mailbox) semantics. Moreover, because of the nature of \Mailbox communications, the encoding in \cite{finkel_synchronizability_2017} cannot be used. In fact, the order of messages received from different recipients becomes important and the relative speeds of the automata ($W$ and $W'$) producing the letters to be compared, cannot be ``synchronised".

In order to understand the expressiveness of \Mailbox system, we start by constraining the shape of topologies. A topology (cfr. Definition \ref{def:topology}) is the underlining communication structure marking the direction of communication among participants.
Here  we start by considering topologies that form a tree and we want to understand whether the topology impacts (or not) the decidability of the Synchronisability Problem. 
When considering tree topologies, each of the inner automata (nodes) receives messages by only one other participant. Because of this, systems with  tree topologies will have the same set of executions for both  \Mailbox and \Ptp semantics.

\begin{definition}[Tree topology]
	Let $\Network=((\Automaton{p})_{p\in \Part}, \messageSet)$ be a network of communicating automata and $\topo{\Network} = (\vertexSet,\edgeSet)$ its topology. 
	$\topo{\Network} = (\vertexSet,\edgeSet)$ is a tree if it is connected, without any cycle, and $ \mid \senders{p}\mid \leq 1 $ for all $ p \in \Part $.
\end{definition}

Let $ \rootP \in \Part $ denote the root of the tree, \ie $ \senders{\rootP} $ is empty. Notice that
$ \senders{p} $ is a singleton for all inner nodes $ p\in \Part\setminus\Set{\rootP} $.

It is interesting to see that we can characterise an algorithm to check whether a system is synchronisable or not.
To this aim, a system needs to validate two conditions: 
\begin{enumerate}
	\item the automata should provide matching receptions whenever their communication partners are ready to send and
	\item for each send of a parent there is a matching reception of the child. 
\end{enumerate}
The main idea is to use the tree structure to capture the influence the automata have on the language of each other.
The receptions of an automaton depend only on the availability of matching incoming messages, \ie in a tree by the sends of at most one parent.
We compute the \emph{influenced language} of $ \Automaton{p} $, denoted $ \LangTree{\Automaton{p}} $, considering only the influence of its parent but not of its children.
These languages have to be computed from the root $ \rootP $---that does not depend on anybody---towards the leafs of the tree.
For an inner node of the tree the possible sequences of outputs of the respective (unique) parent node determine the possible sequences of inputs it can perform and thus the outputs that can be unguarded.
The languages $ \LangTreeIn{\Automaton{p}} $ and $ \LangTreeOut{\Automaton{p}} $ are its respective projections on only receives or sends.

\begin{definition}[Influenced languages]
	Let $p \in \Part$.
	We define the influenced language as follows:
	\begin{align*}
		\LangTree{\Automaton{p}} &=
			\begin{cases}
				\languageof{\Automaton{\rootP}} & \text{if } p = \rootP\\
				\Set{w \mid w \in \languageof{\Automaton{p}} \wedge \left( w\projIn \right)\projMess \in \left( \LangTreeOut{\Automaton{q}} \right)\projMess \wedge \senders{p} = \Set{q} } & \text{otherwise}
			\end{cases}\\
		\LangTreeIn{\Automaton{p}} &= \LangTree{\Automaton{p}}\projIn\\
%		\LangTree{\Automaton{p}} &= \Set{w \mid w \in \languageof{\Automaton{p}} \wedge w\projIn \in \LangTreeIn{\Automaton{p}}}\\
%		\LangTreeIn{\Automaton{p}} &=
%			\begin{cases}
%				\languageof{\Automaton{\rootP}}\projIn & \text{, if } p = \rootP\\
%				\Set{ w \mid w \in \languageof{\Automaton{p}}\projIn \wedge w\projMess \in (\LangTreeOut{\Automaton{q}} \projPeer{p} )\projMess \wedge \senders{p} = \Set{q}} & \text{, otherwise}
%			\end{cases}\\
		\LangTreeOut{\Automaton{p}} &= \LangTree{\Automaton{p}}\projOut
	\end{align*}
\end{definition}

Since the root does not receive any message, it is not influenced by any parent.
Hence, $ \LangTreeIn{\Automaton{\rootP}} = \Set{\varepsilon} $ and $ \LangTree{\Automaton{\rootP}} = \LangTreeOut{\Automaton{\rootP}} = \languageof{\Automaton{\rootP}} $.
For any inner node $ p \in \Part $ of the tree, we allow only  words with input sequences that match  a sequence of outputs of its parent $ q $ influenced language.
To match inputs with their corresponding outputs, we ignore the signs $ ! $ and $ ? $ using the projection $ \projMess $.
Then $ \LangTree{\Automaton{p}} $ contains the words of $ \Automaton{p} $ that respect the possible input sequences induced by the parent $ q $.

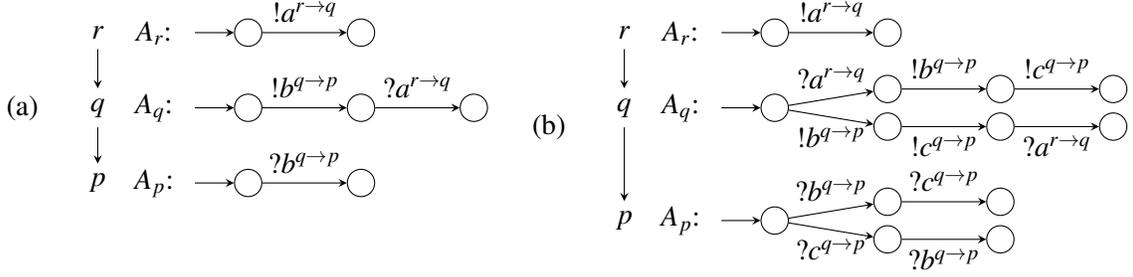
\begin{figure}[t]
	\centering
	\begin{tikzpicture}[auto, >=stealth, ->, every state/.style={minimum size = 0pt}, initial text={}]
		\node (a) at (-1, 1.5) {(a)};
		\node (r) at (0, 2.5) {$ \rootP $};
		\node (p) at (0, 1.5) {$ q $};
		\node (q) at (0, 0.5) {$ p $};
		\path (r) edge (p);
		\path (p) edge (q);
		\node (Ar) at (0.75, 2.5) {$ \Automaton{\rootP} $:};
		\node[state, initial] (s0) at (2, 2.5) {};
		\node[state] (s1) at (3.5, 2.5) {};
		\path (s0) edge node {$ \send{a}{\rootP}{q} $} (s1);
		\node (Aq) at (0.75, 1.5) {$ \Automaton{q} $:};
		\node[state, initial] (s0) at (2, 1.5) {};
		\node[state] (s1) at (3.5, 1.5) {};
		\node[state] (s2) at (5, 1.5) {};
		\path (s0) edge node {$ \send{b}{q}{p} $} (s1);
		\path (s1) edge node {$ \rec{a}{\rootP}{q} $} (s2);
		\node (Ap) at (0.75, 0.5) {$ \Automaton{p} $:};
		\node[state, initial] (s0) at (2, 0.5) {};
		\node[state] (s1) at (3.5, 0.5) {};
		\path (s0) edge node {$ \rec{b}{q}{p} $} (s1);
		\node (a) at (6, 1.25) {(b)};
		\node (r) at (7, 2.5) {$ \rootP $};
		\node (p) at (7, 1.5) {$ q $};
		\node (q) at (7, 0) {$ p $};
		\path (r) edge (p);
		\path (p) edge (q);
		\node (Ar) at (7.75, 2.5) {$ \Automaton{\rootP} $:};
		\node[state, initial] (s0) at (9, 2.5) {};
		\node[state] (s1) at (10.5, 2.5) {};
		\path (s0) edge node {$ \send{a}{\rootP}{q} $} (s1);
		\node (Aq) at (7.75, 1.5) {$ \Automaton{q} $:};
		\node[state, initial] (s0) at (9, 1.5) {};
		\node[state] (s1) at (10.5, 1.75) {};
		\node[state] (s2) at (12, 1.75) {};
		\node[state] (s3) at (13.5, 1.75) {};
		\node[state] (s4) at (10.5, 1.25) {};
		\node[state] (s5) at (12, 1.25) {};
		\node[state] (s6) at (13.5, 1.25) {};
		\path (s0) edge node[above] {$ \rec{a}{\rootP}{q} $} (s1);
		\path (s0) edge node[below] {$ \send{b}{q}{p} $} (s4);
		\path (s1) edge node {$ \send{b}{q}{p} $} (s2);
		\path (s2) edge node {$ \send{c}{q}{p} $} (s3);
		\path (s4) edge node[below] {$ \send{c}{q}{p} $} (s5);
		\path (s5) edge node[below] {$ \rec{a}{\rootP}{q} $} (s6);
		\node (Ap) at (7.75, 0) {$ \Automaton{p} $:};
		\node[state, initial] (s0) at (9, 0) {};
		\node[state] (s1) at (10.5, 0.25) {};
		\node[state] (s2) at (12, 0.25) {};
		\node[state] (s3) at (10.5, -0.25) {};
		\node[state] (s4) at (12, -0.25) {};
		\path (s0) edge node[above] {$ \rec{b}{q}{p} $} (s1);
		\path (s0) edge node[below] {$ \rec{c}{q}{p} $} (s3);
		\path (s1) edge node {$ \rec{c}{q}{p} $} (s2);
		\path (s3) edge node[below] {$ \rec{b}{q}{p} $} (s4);
	\end{tikzpicture}
	\caption{Examples for dependencies that prevent synchronisability}
	\label{fig:notSynchronisable}
\end{figure}

\begin{example}
	Figure~\ref{fig:notSynchronisable} depicts two examples of networks with their topology and the automata of each participant. 
	$ \LangTree{\Automaton{p}} $ only rules out paths that do not respect the sends of its parents.
	Hence, $ \LangTree{\Automaton{\rootP}} = \languageof{\Automaton{\rootP}} = \Set{\emptyword, \send{a}{\rootP}{q}} $, $ \LangTree{\Automaton{q}} = \languageof{\Automaton{q}} = \Set{\emptyword, \send{b}{q}{p}, \send{b}{q}{p} \rec{a}{\rootP}{q}} $, and $ \LangTree{\Automaton{p}} = \languageof{\Automaton{p}} = \Set{\emptyword, \rec{b}{q}{p}} $ in Figure~\ref{fig:notSynchronisable}.(a), but in Figure~\ref{fig:notSynchronisable}.b) $ \LangTree{\Automaton{p}} = \languageof{\Automaton{p}} \setminus \Set{ \rec{c}{q}{p}, \rec{c}{q}{p} \rec{b}{q}{p} } $.
\end{example}

In synchronous communication, sends and receptions are blocking, \ie they have to wait for matching communication partners.
In asynchronous communication with unbounded buffers, only inputs are blocking, whereas all outputs can be performed immediately.
Hence, for synchronisability the automata should provide matching inputs whenever their communication partners are ready to send.
We use causality to check for this condition.
For some automaton $ \Automaton{p} $, action $ a_2 $ \emph{causally depends} on action $ a_1 $, denoted as $ a_1 <_{p} a_2 $, if for all $ w \in \LangTree{\Automaton{p}} $ action $ a_2 $ does not occur or $ a_1 $ occurs before $ a_2 $.

First, we have to check  that in no automata we find a relation of the form $ !x <_p ?y $, because such a dependency always leads to non-synchronisability.
Intuitively, with $ !x <_p ?y $, the automaton $ p $ enforces the order $ !x $ before $ !y $ in the synchronous language, whereas in the asynchronous case with unbounded buffers, these sends may occur in any order.
Not having $ !x <_p ?y $ means that if $ !x $ can occur before $ ?y $ in $ A_p $, then another path in $ A_p $ allows to have $ ?y $ before $ !x $. 

\begin{example}
	Consider Figure~\ref{fig:notSynchronisable}.(a), $ \System{N}{\synch}{} $  has to perform $ \send{b}{q}{p} $ before $ \send{a}{\rootP}{q} $, because $ \rec{a}{\rootP}{q} $ is initially not available in $ \Automaton{q} $.
	In $ \System{N}{\mailbox}{} $, we have the execution $ \send{a}{\rootP}{q} \send{b}{q}{p} \rec{a}{\rootP}{q} \rec{b}{q}{p} $ and hence the trace $ \send{a}{\rootP}{q} \send{b}{q}{p} $.
	The problem is the dependency $ \send{b}{q}{p} <_q \rec{a}{\rootP}{q} $, that blocks the $ a $ in the synchronous but not the asynchronous system.
\end{example}

The other three kinds of dependencies are not necessarily problematic.
Causal dependencies of the form $ ?x <_p ?y $ are enforced by the parent of $ p $.
Dependencies of the form $ ?x <_p !y $ allow $ p $ to make its behaviour depending on the input of its parent.
Finally, dependencies $ !x <_p !y $ allow $ p $ to implement a certain strategy on sends.

Note that, $ \Automaton{p} $ may perform an action $ a $ several times.
Thus, we count the occurrences of $ a $ in a word such that $ a <_p a' $ is actually $ a_{\# n} <_p a'_{\# m} $, where $ a_{\# i} $ is the $ i $'th occurrence of $ a $.
This allows to express dependencies such as $ a_{\# 2} <_p a_{\# 3} $ (the third $ a $ depends on the second) or $ a_{\# 2} <_p a'_{\# 1} $ ($ a' $ depends on the second $ a $).
However, we keep the counters implicit, if they are not relevant, \ie if every action is unique.
For instance all actions in the examples in Figure~\ref{fig:notSynchronisable} are unique and thus we do not mention any counters.

Then, we have to check that missing inputs cannot block outputs of a parent.
The word  $ w' $ is a \emph{valid input shuffle of} $ w $, denoted as $ w' \shuffle_{?} w $, if $ w' $ is obtained from $ w $ by a (possibly empty) number of swappings that replace some $ !x?y $ within $ w $ by $ ?y!x $.

\begin{definition}[Shuffled language]
	Let $ p \in \Part $.
	We define its shuffled language as follows:
	$$ \LangShuffle{p} = \Set{w' \mid w \in \LangTree{\Automaton{p}} \wedge w' \shuffle_{?} w} $$ 
\end{definition}

Then we require $ \LangTree{\Automaton{p}} = \LangShuffle{p} $ to ensure synchronisability, and more precisely to avoid to have any dependence $ !x <_p ?y $ in a participant $ p $.
Note that $ \shuffle_? $ only allows to move inputs further to the front by swapping them with outputs.
Neither the order of outputs nor of inputs within the word is changed.

\begin{example}
	Consider Figure~\ref{fig:notSynchronisable}.(b).
	This example is not synchronisable, because $ \send{b}{q}{p} \send{a}{\rootP}{q} \send{c}{q}{p} \in \traceSet{\System{N}{\mailbox}{}} $ but $ \send{b}{q}{p} \send{a}{\rootP}{q} \send{c}{q}{p} \notin \traceSet{\System{N}{\synch}{}} $.
	Indeed the condition $ \LangTree{\Automaton{p}} = \LangShuffle{p} $ is violated:
	\begin{align*}
		\LangTree{\Automaton{q}} &= \Set{\emptyword, \rec{a}{\rootP}{q}, \send{b}{q}{p}, \rec{a}{\rootP}{q} \send{b}{q}{p}, \send{b}{q}{p} \send{c}{q}{p}, \rec{a}{\rootP}{q} \send{b}{q}{p} \send{c}{q}{p}, \send{b}{q}{p} \send{c}{q}{p} \rec{a}{\rootP}{q}}\\
		&\neq \LangShuffle{q}
		= \LangTree{\Automaton{q}} \cup \Set{ \send{b}{q}{p} \rec{a}{\rootP}{q}, \send{b}{q}{p} \rec{a}{\rootP}{q} \send{c}{q}{p} }
	\end{align*}
	Since $ \Automaton{q} $ does not allow for all possible valid input shufflings, after $ b $ the action $ a $ becomes blocked in $ \Automaton{q} $ in the synchronous but not the asynchronous system.
\end{example}

Finally, we have to check that for each send of a parent there is a matching input in the child, \ie $ \LangTreeOut{\Automaton{q}}\projPeers{\Set{p, q}}\projMess \subseteq \languageof{\Automaton{p}}\projMess $ whenever $ \senders{p} = \Set{q} $.
Unmatched sends appear as sends in asynchronous languages, but are not present in synchronous languages.
This is the trick that we have used in the Post Correspondence encoding to force the synchronous set of traces to be empty.

%Consider, for instance, a system that contains only of the root $ \rootP $ with $ \languageof{\Automaton{\rootP}} = \Set{\send{a}{\rootP}{p}} $ and a node $ \Automaton{p} $ with $ \languageof{\Automaton{p}} = \emptyset $.
%Then $ \traceSet{\System{N}{\mailbox}{}} = \Set{\send{a}{\rootP}{p}} $ but $ \traceSet{\System{N}{\synch}{}} = \emptyset $.

Note that $ \LangTreeOut{\Automaton{q}}\projPeers{\Set{p, q}}\projMess \subseteq \LangTree{\Automaton{p}}\projMess \wedge \LangTree{\Automaton{p}} = \LangShuffle{p} $ ensures all three conditions, \ie also ensures that there are no dependencies of the form $ !x <_p ?y $.

We prove first that words in $ \LangTree{\Automaton{q}} $ belong to executions of the \Mailbox system with unbounded buffers that do not require any interaction with a child of $ q $.

\begin{lemma}\label{lem:wordsoftreelanguage}
	Let $ \Network $ be a network such that $\configFinal = \configSet$, $ \topo{\Network} $ is a tree, $ q \in \Part $, and $ w \in \LangTree{\Automaton{q}} $.
	Then there is an execution $ w' \in \execSet{\System{\Network}{\mailbox}{}} $ such that $ w'\projPeer{q} = w $ and $ w'\projPeer{p} = \varepsilon $ for all $ p \in \Part $ with $ \senders{p} = \Set{q} $.
\end{lemma}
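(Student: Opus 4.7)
The plan is to prove the lemma by induction on the depth of $q$ in the tree, establishing the slightly stronger claim that $w'\projPeer{p} = \varepsilon$ for \emph{every} strict descendant $p$ of $q$, not only the direct children. This strengthening is essential because in the inductive step I will need to argue that the descendants of $q$ also stay silent inside the execution constructed for $q$'s parent, and this follows naturally from the same inductive construction.

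\emph{Base case.} If $q = \rootP$, then $\LangTree{\Automaton{\rootP}} = \languageof{\Automaton{\rootP}}$. Because the root has no sender, every action in $w$ is a send, and sends are always enabled in the mailbox semantics (buffers are unbounded). Hence $w' = w$ is a valid execution in which only $\rootP$ acts, so $w'\projPeer{p} = \varepsilon$ for every descendant $p$ of $\rootP$.

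\emph{Inductive step.} Let $q \neq \rootP$ with parent $q'$, so $\senders{q} = \Set{q'}$. By the definition of $\LangTree{\Automaton{q}}$ there is $u \in \LangTree{\Automaton{q'}}$ with $(u\projOut)\projMess = (w\projIn)\projMess$. The critical observation is that every message in $w\projIn$ has the form $\msg{a}{q'}{q}$, so the matching sequence $u\projOut$ consists exclusively of sends from $q'$ to $q$. Applying the inductive hypothesis to $u$ at $q'$ yields an execution $u'$ with $u'\projPeer{q'} = u$ and $u'\projPeer{r} = \varepsilon$ for every descendant $r$ of $q'$; in particular $q$ itself and all descendants of $q$ remain silent throughout $u'$. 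I then take $w' = u' \cdot w$. After $u'$, the mailbox of $q$ contains exactly the messages that were sent to $q$ during $u'$; since in a tree only $q'$ can send to $q$, these are precisely the messages in $u\projOut$, enqueued in order, i.e.\ $(u\projOut)\projMess = (w\projIn)\projMess$. As $q$ has not moved, it is still in its initial state, so it can now execute $w$: each receive in $w$ consumes the matching head of its mailbox, while each send in $w$ merely enqueues into some child's mailbox (which the child never processes). Because $\configFinal = \configSet$, any configuration reached is final, so $w'$ is an execution of $\System{\Network}{\mailbox}{}$, and the projections $w'\projPeer{q} = w$ and $w'\projPeer{p} = \varepsilon$ for descendants $p$ of $q$ are immediate.

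The main obstacle is the alignment argument just sketched: showing that the contents of $q$'s mailbox after $u'$ exactly match the receive-subword of $w$. Everything in the construction is engineered for this: the tree property guarantees that $q'$ is the unique filler of $q$'s mailbox, and the defining condition of $\LangTree{\Automaton{q}}$ forces a sends-only-to-$q$ output sequence of $q'$ to align with $w\projIn$. Once this is in hand, the rest is routine bookkeeping over the mailbox rules, and the stronger inductive hypothesis ensures that no subtree of $q$ inadvertently acts on its own initiative during the prefix $u'$.
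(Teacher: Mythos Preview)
Your proposal is correct and follows essentially the same approach as the paper: both construct $w'$ as the concatenation $w_1 \cdots w_n$ of influenced-language words along the unique root-to-$q$ path, with the paper unrolling this path explicitly and you packaging it as an induction on depth. Your explicit strengthening of the inductive hypothesis (silence of all strict descendants, not just direct children) and the alignment argument for $q$'s mailbox make precise what the paper leaves to the phrase ``by construction''; the only cosmetic difference is that the paper allows each $w_{i-1}$ to contain outputs towards siblings of $q_i$, whereas your choice of $u$ forces all outputs to go to $q$, which you correctly justify from the defining condition $(w\projIn)\projMess \in (\LangTreeOut{\Automaton{q'}})\projMess$.
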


\begin{proof}
	We construct $ w' $ from the unique path from the root $ \rootP $ to $ q $.
	Let $ \rootP = q_1 $, $ q_2 $, \ldots, $ q_n = q $ be this path of length $ n $ such that $ \senders{q_i} = \Set{q_{i - 1}} $ for all $ 1 < i \leq n $.
	Remember that in a tree there is exactly one path from the root to every node.
	Hence, this path $ \rootP = q_1 $, $ q_2 $, \ldots, $ q_n = q $ and its length $ n $ are uniquely defined by $ q $.
	In the following, let $ w_n = w $.
	For $ n = 1 $, \ie for the case $ q = \rootP $ and $ w \in \LangTree{\Automaton{q}} $, $ w $ consists of outputs only.
	In this case we can choose $ w' = w $ such that $ w' \in \execSet{\System{\Network}{\mailbox}{}} $, $ w'\projPeer{q} = w $, and $ w'\projPeer{p} = \varepsilon $ for all $ p \in \Part $ with $ \senders{p} = \Set{q} $.
	It remains to show that these conditions are satisfied if $ n > 1 $, \ie $ q \neq \rootP $.
	Because of $ w \in \LangTree{\Automaton{q}} $, there is some $ w_{n - 1} \in \LangTree{\Automaton{q_{n - 1}}} $ such that $ \left( w_{n - 1}\projPeer{q} \right)\projMess = \left( w\projPeer{q_{n - 1}} \right)\projMess $, \ie $ w_{n - 1} $ provides the outputs for all inputs in $ w_n $ in the required order.
	By repeating this argument moving from $ q $ towards the root along the path, there is some $ w_{i - 1} \in \LangTree{\Automaton{q_{i - 1}}} $ such that $ \left( w_{i - 1}\projPeer{q_i} \right)\projMess = \left( w_i\projPeer{q_{i - 1}} \right)\projMess $ for all $ 1 < i \leq n $.
	Then $ w' = w_1 w_2 \ldots w_n $.
	Since $ q_1 = \rootP $ is a root, $ w_1 $ contains only outputs, \ie $ w_1 \in \execSet{\System{\Network}{\mailbox}{}} $.
	For all inputs in $ w_2 $, $ w_1 $ provides the matching outputs in the correct order, \ie $ w_1w_2 \in \execSet{\System{\Network}{\mailbox}{}} $.
	By repeating this argument moving from $ \rootP $ towards $ q $ along the path, then $ w' = w_1 \ldots w_n \in \execSet{\System{\Network}{\mailbox}{}} $.
	By construction, $ w'\projPeer{q} = w $ and $ w'\projPeer{p} = \varepsilon $ for all $ p \in \Part $ with $ \senders{p} = \Set{q} $.
\end{proof}

Finally, the next theorem states that synchronisability can be checked by verifying that for all neighbouring peers $p$ and its parent $q$, all sequences of sends from $q$ can be received by $p$ at any moment, \ie without blocking sends from $p$. 

\begin{theorem}
	Let $ \Network $ be a network such that $\configFinal = \configSet$ and $ \topo{\Network} $ is a tree.
	Then $ \traceSet{\System{\Network}{\mailbox}{}} = \traceSet{\System{\Network}{\synch}{}} $ iff, for all $ p, q \in \Part $ with $ \senders{p} = \Set{q} $, we have $ \left( \LangTreeOut{\Automaton{q}}\projPeers{\Set{p, q}} \right)\projMess \subseteq \LangTree{\Automaton{p}}\projMess $ and $ \LangTree{\Automaton{p}} = \LangShuffle{p} $.
	\label{thm:synchronisabilityTree}
\end{theorem}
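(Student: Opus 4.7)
The plan is to prove each direction separately, using the standing observation that $\traceSet{\System{\Network}{\synch}{}} \subseteq \traceSet{\System{\Network}{\mailbox}{}}$ always holds: every synchronous rendez-vous $!m\,?m$ is simulated in the mailbox system by a send followed by an immediate receive on the (then empty) buffer. Hence each direction reduces to controlling the potentially extra traces of the mailbox system.

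For the ``if'' direction, I will show that any $e \in \execSet{\System{\Network}{\mailbox}{}}$ can be rearranged into an execution $e'$ with $e'\projOut = e\projOut$ in the canonical form $!m_1\,?m_1\,!m_2\,?m_2\ldots$, which is directly replayable in $\System{\Network}{\synch}{}$. The rearrangement is an induction on the number of send--receive inversions in $e$: at each step I pick a receive $?m$ that does not immediately follow its matching $!m$ and push it one position earlier. When the action it must pass is an action of a different peer, the tree shape makes the swap innocuous, because each buffer is a single-source FIFO stream so no receive of $p$ depends on an intervening action of another branch. When the action it must pass is a send of $p$ itself, the hypothesis $\LangTree{\Automaton{p}} = \LangShuffle{p}$ guarantees that $\Automaton{p}$ admits the reordered word. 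The hypothesis $\LangTreeOut{\Automaton{q}}\projPeers{\Set{p,q}}\projMess \subseteq \LangTree{\Automaton{p}}\projMess$ ensures that every send of $q$ to $p$ occurring in $e$ really has a matching receive in $\Automaton{p}$, so the receives to be pushed backward exist in $e$ in the first place.

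For the ``only if'' direction I argue contrapositively: if one of the two conditions fails for some $p, q$ with $\senders{p} = \Set{q}$, I exhibit a trace in $\traceSet{\System{\Network}{\mailbox}{}} \setminus \traceSet{\System{\Network}{\synch}{}}$. If $\LangTreeOut{\Automaton{q}}\projPeers{\Set{p,q}}\projMess \not\subseteq \LangTree{\Automaton{p}}\projMess$, pick a shortest witness $u \cdot \send{x}{q}{p}$ of this non-inclusion; by Lemma~\ref{lem:wordsoftreelanguage} applied to $q$ (and its ancestors), the network can perform all sends of $u \cdot \send{x}{q}{p}$ in the mailbox system without any action from $p$, producing a mailbox trace containing $\send{x}{q}{p}$ that is absent from $\traceSet{\System{\Network}{\synch}{}}$, since synchronously $\Automaton{p}$ offers no matching receive after $u$. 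If instead $\LangTree{\Automaton{p}} \neq \LangShuffle{p}$, a witness $w \in \LangShuffle{p} \setminus \LangTree{\Automaton{p}}$ encodes a dependency $!x <_p ?y$ in $\Automaton{p}$; combining an execution of $\Automaton{p}$ realising this $!x$ with an execution of $p$'s parent providing the matching send of $?y$ via Lemma~\ref{lem:wordsoftreelanguage}, the mailbox system produces a trace in which $!x$ fires before the matching send of $?y$, whereas the synchronous system cannot, since there $\Automaton{p}$ must consume $?y$ before emitting $!x$.

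The main obstacle will be making the rearrangement of the ``if'' direction rigorous: I need a well-founded measure (such as the number of send--receive inversions summed across all peers) that strictly decreases with each local swap, while verifying that buffer contents and the FIFO ordering of receives at each peer are preserved throughout. The tree topology is the essential ingredient that makes this work, since a single-source FIFO buffer cannot be disrupted by actions of unrelated branches, a property that would fail under mailbox semantics as soon as a peer has two distinct senders.
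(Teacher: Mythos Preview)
There are two concrete gaps.

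\textbf{The ``if'' direction.} Your rearrangement scheme presupposes that every send in $e$ already has its matching receive in $e$, but a mailbox execution may terminate with orphan sends --- e.g.\ $e = \send{a}{q}{p}$ alone is a perfectly good execution with an unreceived message sitting in $p$'s buffer. The subset hypothesis tells you that $\Automaton{p}$ \emph{admits} a receive for each such message, not that the receive is present in $e$; so ``pushing receives backward'' cannot manufacture the canonical form $!m_1?m_1\cdots!m_n?m_n$ by permutation alone, and you give no argument for how the missing receives are to be inserted. The paper sidesteps this by not starting from $e$ at all: it takes the trace $w = e\projOut$, \emph{defines} $w'$ to be $w$ with a receive inserted after each send, and proves $w' \in \execSet{\System{\Network}{\mailbox}{}}$ by induction on $|w|$. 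In the inductive step, after $v'\,\send{a}{q}{p}$ one needs $\Automaton{p}$ to accept $\rec{a}{q}{p}$ immediately; the subset condition supplies a word of $\LangTree{\Automaton{p}}$ whose receive sequence extends to this point, and the shuffle condition then lets that receive be pulled in front of any pending outputs of $p$.

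\textbf{The ``only if'' direction, shuffle case.} You have the witness backwards. A minimal $w \in \LangShuffle{p} \setminus \LangTree{\Automaton{p}}$ arises from some $w' \in \LangTree{\Automaton{p}}$ of the form $u\,!x_1\cdots!x_n\,?a$ by moving $?a$ to the left, so $w = u\,?a\,!x_1\cdots!x_n$. The dependency $!x <_p ?y$ thus says that $\Automaton{p}$ \emph{must} emit the $!x_i$ before consuming $?a$, not the reverse --- so your sentence ``$\Automaton{p}$ must consume $?y$ before emitting $!x$'' is the negation of what the witness encodes. Consequently the separating trace is the one with the parent's $\send{a}{q}{p}$ \emph{before} the $!x_i$: it is realised in the mailbox system by the construction of Lemma~\ref{lem:wordsoftreelanguage} (ancestors act first, then $p$ performs $w'$), but synchronously it would force $p$ to receive $a$ before the $!x_i$, i.e.\ to perform $w$, which by hypothesis it cannot. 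Your claimed witness --- $!x$ before the parent's send of $y$ --- is precisely the order that \emph{is} synchronously realisable, so it separates nothing.
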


\begin{proof}
	\begin{compactdesc}
		\item[$ \Rightarrow $] Assume $ \traceSet{\System{\Network}{\mailbox}{}} = \traceSet{\System{\Network}{\synch}{}} $.
			We have to show that 1. $ \left( \LangTreeOut{\Automaton{q}}\projPeers{\Set{p, q}} \right)\projMess \subseteq \LangTree{\Automaton{p}}\projMess $ and 2. $ \LangTree{\Automaton{p}} = \LangShuffle{p} $ for all $ p, q \in \Part $ with $ \senders{p} = \Set{q} $.
			\begin{compactenum}
				\item Assume $ \left( \LangTreeOut{\Automaton{q}}\projPeers{\Set{p, q}} \right)\projMess \not\subseteq \LangTree{\Automaton{p}}\projMess $ for some $ p, q \in \Part $ with $ \senders{p} = \Set{q} $.
					Then there is some sequence of outputs $ v \in \LangTreeOut{\Automaton{q}}\projPeers{\Set{p, q}} $ for that there is no matching sequence of inputs in $ \LangTree{\Automaton{p}} $, \ie $ v\projMess \notin \LangTree{\Automaton{p}}\projMess $.
					By Lemma~\ref{lem:wordsoftreelanguage}, then there is an execution $ v' \in \execSet{\System{\Network}{\mailbox}{}} $ such that $ \left( v'\projOut \right)\projPeer{q} = v $ and $ v'\projPeer{p} = \varepsilon $.
					Hence, $ v'\projOut \in \traceSet{\System{\Network}{\mailbox}{ }} = \traceSet{\System{\Network}{\synch}{}} = \execSet{\System{\Network}{\synch}{}} $.
					Because of $ v'\projOut \in \execSet{\System{\Network}{\synch}{}} $ and $ v'\projPeer{p} = \varepsilon $, $ \Automaton{p} $ has to be able to receive the sequence of outputs of $ v $ without performing any outputs itself, \ie $ v\projMess \in \left( \languageof{\Automaton{p}}\projIn \right)\projMess $ and $ v\projMess \in \LangTreeIn{\Automaton{p}}\projMess $.
					But then $ v\projMess \in \LangTree{\Automaton{p}}\projMess $.
					This is a contradiction.
					We conclude that $ \left( \LangTreeOut{\Automaton{q}}\projPeers{\Set{p, q}} \right)\projMess \subseteq \LangTree{\Automaton{p}}\projMess $ for all $ p, q \in \Part $ with $ \senders{p} = \Set{q} $.
				\item By definition, $ \LangTree{\Automaton{p}} \subseteq \LangShuffle{p} $.
					Assume $ v \in \LangShuffle{p} $.
					We have to show that $ v \in \LangTree{\Automaton{p}} $.
					Since $ v \in \LangShuffle{p} $, then there is some $ v' $ such that $ v' \in \LangTree{\Automaton{p}} $ and $ v' \shuffle_{?} v $.
					We consider the shortest two such words $ v $ and $ v' $, \ie $ v = w \rec{a}{q}{p} !x_1 \ldots !x_n $ and $ v' = w !x_1 \ldots !x_n \rec{a}{q}{p} $ with $ n > 0 $, where $ \senders{p} = \Set{q} $ and $ x_1, \ldots x_n $ are ouputs from $ p $ to its children.
					Then also $ w !x_1 \ldots !x_n \rec{a}{q}{p} \in \LangTree{\Automaton{p}} $.
					By Lemma~\ref{lem:wordsoftreelanguage}, then there is an execution $ w' \in \execSet{\System{\Network}{\mailbox}{}} $ such that $ w'\projPeer{p} = w !x_1 \ldots !x_n \rec{a}{q}{p} $.
					By the construction of $ w' $ in the proof of Lemma~\ref{lem:wordsoftreelanguage}, $ w' $ contains the output $ \send{a}{q}{p} $ before the outputs $ !x_1 \ldots !x_n $.
					Then $ \send{a}{q}{p} $ occurs before $ !x_1 \ldots !x_n $ in $ w'\projOut \in \traceSet{\System{\Network}{\mailbox}{}} $.
					Since $ \traceSet{\System{\Network}{\mailbox}{}} = \traceSet{\System{\Network}{\synch}{}} = \execSet{\System{\Network}{\synch}{}} $, $ w'\projOut \in \execSet{\System{\Network}{\synch}{}} $.
					Then $ \Automaton{p} $ has to receive $ \rec{a}{q}{p} $ before sending $ !x_1 \ldots !x_n $, \ie $ v = w \rec{a}{q}{p} !x_1 \ldots !x_n \in \LangTree{\Automaton{p}} $.
					We conclude that $ \LangTree{\Automaton{p}} = \LangShuffle{p} $.
			\end{compactenum}
		\item[$ \Leftarrow $] Assume that $ \left( \LangTreeOut{\Automaton{q}}\projPeers{\Set{p, q}} \right)\projMess \subseteq \LangTree{\Automaton{p}}\projMess $ and $ \LangTree{\Automaton{p}} = \LangShuffle{p} $ for all $ p, q \in \Part $ with $ \senders{p} = \Set{q} $.
			We have to show that $ \traceSet{\System{\Network}{\mailbox}{ }} = \traceSet{\System{\Network}{\synch}{}} $.
			\begin{compactdesc}
				\item[$ w \in \traceSet{\System{\Network}{\mailbox}{}} $:] Let $ w' $ be the word obtained from $ w $ by adding the matching receive action directly after every send action.
					We show that $ w' \in \execSet{\System{\Network}{\mailbox}{ }} $, by an induction on the length of $ w $.
					\begin{compactdesc}
						\item[Base Case:] If $ w = \send{a}{q}{p} $, then $ w' = \send{a}{q}{p} \rec{a}{q}{p} $.
							Since $ w \in \traceSet{\System{\Network}{\mailbox}{ }} $, $ \Automaton{q} $ is able to send $ \send{a}{q}{p} $ in its initial state within the system $ \System{\Network}{\mailbox}{ } $.
							Then $ \send{a}{q}{p} \in \LangTreeOut{\Automaton{q}} $.
							Because of $ \left( \LangTreeOut{\Automaton{q}}\projPeers{\Set{p, q}} \right)\projMess \subseteq \LangTree{\Automaton{p}}\projMess $, then $ \rec{a}{q}{p} \in \LangTree{\Automaton{p}} $, \ie $ \Automaton{q} $ can receive $ \rec{a}{q}{p} $ in its initial state.
							Then $ w' \in \execSet{\System{\Network}{\mailbox}{ }} $.
						\item[Inductive Step:] If $ w = v \send{a}{q}{p} $ with $ v \send{a}{q}{p} \in \traceSet{\System{\Network}{\mailbox}{ }} $, then $ w' = v' \send{a}{q}{p} \rec{a}{q}{p} $.
							By  induction, $ v' \in \execSet{\System{\Network}{\mailbox}{ }} $.
							Since $ w = v \send{a}{q}{p} \in \traceSet{\System{\Network}{\mailbox}{ }} $, $ \Automaton{q} $ is able to perform $ \send{a}{q}{p} $ in some state after performing all the outputs in $ v $.
							Since $ v' \projOut = v $, then $ \Automaton{q} $ is able to perform $ \send{a}{q}{p} $ in some state after performing all the outputs in $ v' $.
							Also inputs of $ \Automaton{q} $ cannot prevent $ \Automaton{q} $ from sending $ \send{a}{q}{p} $ after $ v' $, because:
							\begin{itemize}
								\item For all such inputs $ ?y $ there has to be $ !y $ occurring before the input.
								\item Then $ !y $ is among the outputs of $ v $, because $ \Automaton{q} $ is able to perform $ \send{a}{q}{p} $ in some state after performing all the outputs in $ v $ of $ q $.
								\item Then also $ ?y $ is already contained in $ v' $, by the construction of $ v' $.
							\end{itemize}
							This entails that  $ \Automaton{q} $ can send $ \send{a}{q}{p} $ after execution $ v' $ such that $ v' \send{a}{q}{p} \in \execSet{\System{\Network}{\mailbox}{ }} $.
							Hence we have $ \left( \left( v' \send{a}{q}{p} \right) \projPeers{\Set{p, q}} \right) \projOut = \left( w \projPeers{\Set{p, q}} \right) \projOut \in \LangTreeOut{\Automaton{q}} \projPeers{\Set{p, q}} $ and after $ v' \send{a}{q}{p} $ all buffers are empty except for the buffer of $ \Automaton{p} $ that contains only $ \msg{a}{q}{p} $.
							By noticing that $ \left( \LangTreeOut{\Automaton{q}}\projPeers{\Set{p, q}} \right)\projMess \subseteq \LangTree{\Automaton{p}}\projMess $, then $ \left( \left( v' \rec{a}{q}{p} \right) \projPeers{\Set{p, q}} \right) \projIn \in \LangTree{\Automaton{p}} $, \ie $ \Automaton{p} $ is able to receive $ \rec{a}{q}{p} $ in some state after receiving all the inputs in $ v' $.
							By $ \LangTree{\Automaton{p}} = \LangShuffle{p} $ and since $ \msg{a}{q}{p} $ is in its buffer, then $ \Automaton{p} $ can receive $ \rec{a}{q}{p} $ after execution $ v' \send{a}{q}{p} $ such that $ w' = v' \send{a}{q}{p} \rec{a}{q}{p} \in \execSet{\System{N}{\mailbox}{ }} $.
					\end{compactdesc}
					Hence $ w' \in \execSet{\System{\Network}{\mailbox}{ }} $.
					This entails that the \Synchronous system can simulate the run of $ w' $ in $ \System{\Network}{\mailbox}{ } $ by combining a send action with its direct following matching receive action into a synchronous communication.
					Since $ w'\projOut = w $, then $ w \in \execSet{\System{\Network}{\synch}{}} = \traceSet{\System{\Network}{\synch}{}} $.
				\item[$ w \in \traceSet{\System{\Network}{\synch}{}} $:] For every output in $ w $, $ \System{N}{\synch}{} $ was able to send the respective message and directly receive it.
					Let $ w' $ be the word obtained from $ w $ by adding the matching receive action directly after every send action.
					Then $ \System{\Network}{\mailbox}{ } $ can simulate the run of $ w $ in $ \System{\Network}{\synch}{} $ by sending every message first to the \Mailbox of the receiver and then receiving this message.
					Then \mbox{$ w' \in \execSet{\System{\Network}{\mailbox}{ }} $} and, thus, $ w = w'\projOut \in \traceSet{\System{\Network}{\mailbox}{ }} $. \qedhere
			\end{compactdesc}
	\end{compactdesc}
\end{proof}

Since there is no difference between \Mailbox and \Ptp communication with a tree topology, we have $ \traceSet{\System{\Network}{\mailbox}{}} = \traceSet{\System{\Network}{\pp}{}} $.
Accordingly, Theorem~\ref{thm:synchronisabilityTree} provides a decision procedure for \Mailbox and \Ptp systems.
In both cases it suffice to algoritmically check, whether $ \left( \LangTreeOut{\Automaton{q}}\projPeers{\Set{p, q}} \right)\projMess \subseteq \LangTree{\Automaton{p}}\projMess $ and $ \LangTree{\Automaton{p}} = \LangShuffle{p} $ for all $ p, q \in \Part $ with $ \senders{p} = \Set{q} $.
This can be done by computing the influenced languages starting from the root moving down in the tree.

\begin{corollary}
	Let $ \Network $ be a network such that $\configFinal = \configSet$ and $ \topo{\Network} $ is a tree.
	Then the Synchronisability Problem is decidable for \Ptp and \Mailbox communication.
\end{corollary}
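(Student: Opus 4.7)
The plan is to reduce the decision problem, via Theorem~\ref{thm:synchronisabilityTree}, to checking the two conditions $\left( \LangTreeOut{\Automaton{q}}\projPeers{\Set{p, q}} \right)\projMess \subseteq \LangTree{\Automaton{p}}\projMess$ and $\LangTree{\Automaton{p}} = \LangShuffle{p}$ for every pair of adjacent participants $p, q$ with $\senders{p} = \Set{q}$. Since the topology is a tree, there are only finitely many such pairs and the test can be performed in topological order starting from the root $\rootP$. Because inclusion and equality of regular languages are decidable, the entire argument reduces to showing that, for each $p \in \Part$, the languages $\LangTree{\Automaton{p}}$ and $\LangShuffle{p}$ are effectively regular and can be computed from the automata $\Automaton{p}$.

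First I would handle $\LangTree{\Automaton{p}}$ by induction on the depth of $p$ in the tree. For the base case $p = \rootP$, we have $\LangTree{\Automaton{\rootP}} = \languageof{\Automaton{\rootP}}$, which is regular by construction. For the inductive step, assume $\LangTree{\Automaton{q}}$ is regular, where $\senders{p} = \Set{q}$. Then $\LangTreeOut{\Automaton{q}}$ is a projection of a regular language and is regular; applying $\projMess$ (which just removes the symbols $!$ and $?$) yields a regular language $L_q$ over $\messageSet$. The language $\LangTree{\Automaton{p}}$ is then the set of words $w \in \languageof{\Automaton{p}}$ such that $(w\projIn)\projMess \in L_q$; this can be realised by a standard product construction between $\Automaton{p}$ and a DFA for $L_q$, where the product advances in the latter only on input transitions. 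Regularity and effectiveness follow, as do those of $\LangTreeIn{\Automaton{p}}$ and $\LangTreeOut{\Automaton{p}}$ via projection.

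The step I expect to be the main obstacle is showing that $\LangShuffle{p}$ is regular. The relation $\shuffle_?$ allows swapping any adjacent pair $!x?y$ into $?y!x$, iterated arbitrarily many times, so a single word may have exponentially many shuffles. However, the relation never reorders two outputs or two inputs and only ever moves inputs towards the front; the combinatorial content is therefore captured by an automaton that, while reading $\LangTree{\Automaton{p}}$, may \emph{postpone} outputs into a finite bag and fire a pending output at any later point, as long as no input crosses two outputs with the same direction. Concretely, I would build a finite transducer $T$ whose states track the multiset of outputs currently being delayed, bounded by the number of output transitions available in $\Automaton{p}$ between consecutive inputs; one then checks that this bound is finite so that $\LangShuffle{p}$ is the image of $\LangTree{\Automaton{p}}$ under a finite-state transduction, hence regular. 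Alternatively, because $\shuffle_?$ is a length-preserving local rewriting, $\LangShuffle{p}$ can be presented directly as the product of $\Automaton{p}$ with a bounded buffer of pending outputs, identical in spirit to the construction used above for $\LangTree{\Automaton{p}}$.

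Once both languages are shown to be effectively regular, the checks $\left( \LangTreeOut{\Automaton{q}}\projPeers{\Set{p, q}} \right)\projMess \subseteq \LangTree{\Automaton{p}}\projMess$ and $\LangTree{\Automaton{p}} = \LangShuffle{p}$ are instances of standard decidable problems on regular languages. Performing these checks for each of the finitely many parent--child pairs in the tree yields a decision procedure for the Synchronisability Problem under \Mailbox communication. Since on tree topologies \Mailbox and \Ptp executions coincide, the same algorithm decides the \Ptp case, which concludes the corollary.
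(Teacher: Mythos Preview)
Your overall strategy matches the paper's (one-line) argument: invoke Theorem~\ref{thm:synchronisabilityTree}, compute the influenced languages top-down using closure of regular languages under projection and product, and test the two conditions for each parent--child edge. The inductive construction of $\LangTree{\Automaton{p}}$ is correct.

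The genuine gap is the claim that $\LangShuffle{p}$ is regular: it is not, in general. Take a root $q$ with a self-loop $\send{b}{q}{p}$, a node $p$ that alternates $\send{a}{p}{r}$ and $\rec{b}{q}{p}$, and a leaf $r$. Then $\LangTree{\Automaton{p}}$ is the set of finite prefixes of $(\send{a}{p}{r}\,\rec{b}{q}{p})^\omega$, and
\[
\LangShuffle{p} \;\cap\; (\rec{b}{q}{p})^*(\send{a}{p}{r})^* \;=\; \bigl\{\,(\rec{b}{q}{p})^n(\send{a}{p}{r})^m \mid m \in \{n, n{+}1\}\,\bigr\},
\]
which is not regular. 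In this example there is exactly one output between any two consecutive inputs, so your proposed bound on the ``bag of postponed outputs'' is~$1$; yet to produce $(\rec{b}{q}{p})^n(\send{a}{p}{r})^n$ from $(\send{a}{p}{r}\,\rec{b}{q}{p})^n$ the transducer must buffer $n$ outputs. The swaps accumulate along the whole word, so no finite-state buffer suffices, and neither of your two constructions can compute $\LangShuffle{p}$.

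What rescues the argument is that you never need $\LangShuffle{p}$ itself. Since $\LangTree{\Automaton{p}} \subseteq \LangShuffle{p}$ by definition, the equality $\LangTree{\Automaton{p}} = \LangShuffle{p}$ holds iff $\LangTree{\Automaton{p}}$ is closed under a \emph{single} adjacent swap $!x\,?y \mapsto ?y\,!x$: closure under one swap yields, by induction on the number of swaps, closure under all of $\shuffle_{?}$. This single-swap relation is a rational transduction $T$, so $T(\LangTree{\Automaton{p}})$ is effectively regular and the inclusion $T(\LangTree{\Automaton{p}}) \subseteq \LangTree{\Automaton{p}}$ is decidable. With this replacement for the second check, the remainder of your proof goes through unchanged.
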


%\section{Conjecture is that for one path graphs sync is decidable for \MAILBOX}\label{sec:conjecture}
%
%Graphs are called multitree
%The ancestors of any node form an inverted tree
%see paper multitree
%
%\begin{fact}
%Graphs that contain at most one path between each pair of nodes are diamond free graphs 
%\end{fact}
%
%\begin{proposition}
%Diamond free graphs can be decomposed into trees, reversed trees and hourglasses
%\end{proposition}
%
%\begin{fact}
%In a diamond free graph each path crosses at most one hourglass
%\end{fact}
%
%\begin{theorem}
%Synchronisability is decidable for multitree topologies
%\end{theorem}
%\begin{proof}
%The idea is to repeat for any of the leaves the reversed tree algorithm
%\end{proof}

\section{Discussion}
\label{sec:concl}

In this paper, we start answering a problem that have remained open since \cite{finkel_synchronizability_2017}. We first have shown that the Generalised Synchronisability Problem is undecidable for \Mailbox systems.  The undecidability result cannot be easily adpated to communicating automata without final states as the role of automaton \Automaton{L} (\ie the comparator) is made more complex by the fact that letters that end up in its buffer are mixed between those coming from \Automaton{W} and \Automaton{W'}. This would require an additional synchronisation between \Automaton{W} and  \Automaton{W'} which would mess the exchanges between those automata and \Automaton{I}.

Hence, in an attempt to get closer to a proof of decidability for Synchronisability Problem, we considered tree topologies. 
We have presented an algorithm to decide synchronisability for systems that feature a tree topology.
The key ingredient in the above algorithm for trees is, that we can compute the languages $ \LangTree{\Automaton{p}} $ and thus the possible behaviour for every node, by starting from the root and following the unique path from the root to the respective node.
Then we only check properties on a single node ($ \LangTree{\Automaton{p}} = \LangShuffle{p} $) or between two neighbouring nodes ($ \LangTreeOut{\Automaton{q}}\projPeers{\Set{p, q}}\projMess \subseteq \LangTree{\Automaton{p}}\projMess $).

We conjecture that this result can be extended to reversed trees and multitrees \cite{furnas_multitrees_1994}. By observing that 
a forest is synchronisable if and only if each of its trees is synchronisable. 
Moreover reversed trees and multitrees have the same property of featuring  unique paths between any two nodes.
Hence, we conjecture that the above technique can be extended to reversed trees and multitrees.
Moreover the absence of coordination between brothers should also entail that the result is true for \Ptp.
These developments are left for future work.
Moreover, we are working on a mechanisation of our proofs in Isabelle, which is quite challenging as there are few existing mechanisation approaches for communicating automata.

%\cin{can you work out implementations out of a mechanized proof in Isabelle?}

%%
%% Bibliography
%%

%% Please use bibtex, 

\bibliography{biblio}

%\appendix

%\input{app.tex}

\end{document}